\newtheorem{definition}{Definition}
\newtheorem{theorem}{Theorem}
\newtheorem{proposition}{Proposition}
\newtheorem{corollary}{Corollary}
\newtheorem{example}{Example}
\newtheorem{lemma}{Lemma}
\author{Wang Changlong, Jigen Peng}
\title{Analysis of equivalence relation in joint sparse recovery}
\begin{document}\large
\title{Analysis of the equivalence relationship in joint sparse recovery}
\date{}
\maketitle
\begin{abstract}
The joint sparse recovery problem is a generalization of the single measurement vector problem which is widely studied in Compressed Sensing and it aims to recovery a set of jointly sparse vectors. i.e. have nonzero entries concentrated at common location. Meanwhile $l_p$-minimization subject to matrices is widely used in a large number of algorithms designed for this problem. Therefore the main contribution in this paper is two theoretical results about this technique. The first one is to prove that in every multiple systems of linear equation, there exists a constant $p^{\ast}$ such that the original unique sparse solution also can be recovered from a minimization in $l_p$ quasi-norm subject to matrices whenever $0< p<p^{\ast}$. The other one is to show an analysis expression of such $p^{\ast}$. Finally, we display the results of one example to confirm the validity of our conclusions.
\end{abstract}
%\begin{flushleft}
%\textbf{keywords闁挎稒鐣?sparse recovery ;\quad null space constant;\quad null space property
%\end{flushleft}
\textbf{keywords:} sparse recovery, multiple measurement vectors, joint sparse recovery, null space property, $l_p$-minimization
%\begin{keywords}
%sparse recovery; null space constant; null space property; $l_0$-minimization; $l_p$-minimizationh
%\end{keywords}
%section1
\section{INTRODUCTION}
In sparse information processing, one of the central problems is to recovery a sparse solution of an underdetermined linear system, such as visual coding \cite{olshausen1996emergence}, matrix completion \cite{candes2009exact}, source localization \cite{malioutov2005sparse}, and face recognition \cite{wright2009robust}. That is, letting $A$ be an underdetermined matrix of size $m \times n$ and $b \in \mathbb{R}^m$ is a vector representing some signal, so the single measurement vector (SMV) is popularly modeled into the following $l_0$-minimization.
%One of the core problems in Compressed Sensing is to find the sparest solution to the underdetermined system $Ax=b$, where $A \in R^{n \times m}$ is an underdetermined matrix (i.e. $n<m$), and $b \in R^m$ a vector representing some measured or represented signals. The problem is popularly modeled into the following $l_0$-minimization:
\begin{eqnarray}
\mathop{\min}\limits_{x \in \mathbb R^m} \|x\|_0\ s.t. \ Ax=b,
\end{eqnarray}
where $\|x\|_0$ indicates the number of nonzero elements of $x$. However, $l_0$-minimization has been proved to be NP-hard \cite{natarajan1995sparse} because of the discrete and discontinuous nature of $\|x\|_0$ . In order to overcome this difficulty, many researchers have suggested to replace $\|x\|_0$ with $\|x\|_p^p$. Instead of $l_0$-minimization, they consider the $l_p$-minimization with $0\leq p \leq1$,
\begin{eqnarray}
\mathop{\min}\limits_{x \in \mathbb R^m} \|x\|_p^p \ s.t. \ Ax=b
\end{eqnarray}
where $\|x\|_p^p=\sum_{i=1}^m |x_i|^p$ (\cite{foucart2009sparsest} \cite{chartrand2007exact}). Due to the fact that $\|x\|_0=\mathop{lim}\limits_{p \to 0} \|x\|_p^p$, it seems to be more natural to consider $l_p$-minimization.

Furthermore, a natural extension of single measurement vector is the joint sparse recovery problem, also known as the multiple measurement vector (MMV) problem which arises naturally in source localization \cite{Malioutov2003A}, neuromagnetic imaging \cite{Cotter2005Sparse}, and equalization of sparse-communication channels \cite{Fevrier1999Reduced} \cite{Cotter2002Sparse}. Instead of a single measurement $b$, we are given a set of $r$ measurements,
\begin{eqnarray}
Ax^{(k)}=b^{(k)} \qquad k=1 \dots r,
\end{eqnarray}
in which the vectors $x^{(k)} \ ( k=1 \dots r)$ are joint sparse, i.e.  the solution vectors share a common support and have nonzero entries concentrated at common locations.

Let $A \in \mathbb R^{m \times n}$ and $B=[b^{(1)} \dots b^{(r)}] \in \mathbb{R}^{m \times r}$ , the MMV problem is to look for the row-sparse solution matrix and it can be modeled as the following $l_{2,0}$-minimization problem.
\begin{eqnarray}
\mathop{\min}\limits_{X \in \mathbb{R}^{m\times r}} \|X\|_{2,0}\ s.t. \ AX=B,
\end{eqnarray}
where $\|X\|_{2,0}=\sum_{i=1}^n \|X_{row \ i}\|_{2,0}$ and $X_{row \ i}$ is a row vector and defined as the $i$-th row of $X$, and $\|X_{row \ i}\|_{2,0}=1$ if $\|X_{row \ i}\|_2 \neq 0$ and $\|X_{row \ i}\|_{2,0}=0$ if $\|X_{row \ i}\|_2 =0$.

We can define the support of $X$, $support(X)=S=\{i:\ \|X_{row \ i}\|_2 \neq 0\}$ and call the solution $X$ is $k$-sparse, when $|S| \leq k$, where $|S|$ is the cardinality of set $S$ and we also say that $X$ can be recovered by model (4), if $X$ is the unique solution to model (4).

It needs to be emphasized that we can not regard the solution of multiple measurement vector (MMV) as a combination of several solutions of single measurement vectors. i.e., the solution matrix $X$ to $l_{2,0}$-minimization is not always composed by the solution vectors to $l_0$-minimization. For example.

\begin{example}
We consider an underdetermined system $AX=B$, where
\begin{center}
$A=\left(
\begin{array}{ccccc}
2 & 0 & 0 & 1 & 0 \\
0 & 0.5 & 0 & 1 & 0 \\
0 & 0 & 1 & 2 & -0.5 \\
0 & 0 & 0 & -1 & 0.5
\end{array}
\right)$ and $B=\left(
\begin{array}{cc}
1 & 1  \\
1 & 1  \\
0 & 1  \\
0 & 0
\end{array}
\right).$
\end{center}
\end{example}

If we treat the $AX=B=[b_1 \ b_2 ]$ as a combination of two single measurements vector, $Ax=b_1$ and $Ax=b_2$, it is easy to verify that each sparse solution to these two problems is $x_1=[0.5\ 2\ 0\ 0]^T$,and $x_2=[0\ 0\ 1\ 2]^T$ . So let $X^*=[x_1 \ x_2]$, it is easy to check that $\|X^*\|_{2,0}=4$.
In fact, it is easy to verify that
\begin{center}
 $X=\left(
\begin{array}{ccc}
0.5 & 0.5 \\
2 & 2 \\
0 & 1 \\
0 & 0 \\
\end{array}
\right)$
\end{center}
$X$ is the solution to $l_{2,0}$-minimization since $\|X\|_{2,0}=3 < \|X^*\|_{2,0}=4 $.

With this simple Example 1, we should be aware that MMV problem wants a jointly sparse solution, not a solution which is just composed by sparse vectors. Therefore, MMV problem is more complex than SMV, so MMV needs its own theoretical work. Be inspired by $l_p$-minimization, a popular approach to find the sparest solution to MMV problem is to solve the following $l_{2,p}$-minimization optimization problem.
\begin{eqnarray}
\mathop{\min}\limits_{X \in \mathbb R^(m\times r)} \|X\|_{2,p}\ s.t. \ AX=B,
\end{eqnarray}
where the mixed norm $\|X\|_{2,p}^p=\sum_{i=1}^n\|X_{row \ i }\|_2^p$ and $p\in (0,1].$

\subsection{Related Work}

Many researchers have made a lot of contribution related to the existence, uniqueness and other properties of $l_{2,p}$-minimization \cite{Liao2015Analysis}\cite{Foucart2010Real}\cite{Lai2011The}\cite{Van2010Theoretical}.
Eldar \cite{Eldar2009Beyond} gives a sufficient condition for MMV when $p=1$, and Unser \cite{Unser2000Sampling} analyses some properties of the solution to $l_{2,p}$-minimization when $p=1$.
Fourcart and Gribonval \cite{Foucart2010Real} studied the MMV setting when $r=2$ and $p=1$, they gave a sufficient and necessary condition to judge whether a $k$-sparse matrix $X$ can be recovered by $l_{2,p}$-minimization. Furthermore, Lai and Liu \cite{Lai2011The} consider the MMV setting when $r \geq 2$ and $p\in [0,1]$, they improved the condition in \cite{Foucart2010Real} and give a sufficient and necessary condition when $r\geq2$ .

On the other hand, numerous algorithms have been proposed and studied for $l_{2,0}$-minimization (e.g. \cite{Hyder2010Direction} \cite{Hyder2009A}). Orthogonal Matching Pursuit (OMP) algorithms are extended to the MMV problem \cite{Tropp2007Signal}, and convex optimization formulations with mixed norm extend to the corresponding the SMV solution \cite{Milzarek2014A}.
Hyder \cite{Hyder2009A} provides us a robust algorithm for $l_{2,p}$-minimization which shows a clear improvement in both noiseless and noisy environment.

Due to the fact that $\|X\|_{2,0}=\mathop{lim}\limits_{p \to 0} \|X\|_{2,p}^p$, it seems to be more natural to consider $l_{2,p}$-minimization instead of a NP-hard optimization $l_{2,0}$-minimization than others. However, it is an important theoretical problem that whether there exists a general equivalence relationship between $l_{2,p}$-minimization and $l_{2,0}$-minimization.

In the case $r=1$, Peng \cite{peng2015np} have given a definite answer to this theoretical problem. There exists a constant $p(A,b)>0$, such that every a solution to $l_p$-minimization is also the solution to $l_0$-minimization whenever $0<p<p(A,b)$,
\begin{eqnarray}
p(A,b)=\frac{\ln\left(\mathop{\min}\limits_{Ax=b}\|x\|_0+1 \right)-\ln\left(\mathop{\min}\limits_{Ax=b}\|x\|_0 \right)}{\ln r-\ln r_m},
\end{eqnarray}
However, this range can not be calculated.

Peng \cite{peng2015np} only proves the conclusion when $r=1$, so it is urgent to extend this conclusion to MMV problem. Furthermore, Peng just proves the existence of such $p$, he does not give us a computable expression of such $p$. Therefore, the main purpose of this paper is not only to prove the equivalence relationship between $l_{2,p}$-minimization and $l_{2,0}$-minimization, but also present an analysis expression of such $p$ in Section 2 and Section 3.

\subsection{Main Contribution}
In this paper, we focus on the equivalence relationship between $l_{2,p}$-minimization and $l_{2,0}$-minimization. Furthermore, it is an application problem that an analysis expression of such $p^{*}$ is needed, especially in designing some algorithms for $l_{2,p}$-minimization.

In brief, this paper gives answers to two problems which are urgently needed to be solved:

(\uppercase\expandafter{\romannumeral1}). There exists a constant $p^{*}$ such that every $k$-sparse solution matrix $X$ to $l_{2,0}$-minimization is also the solution to $l_{2,p}$-minimization whenever $0<p<p^{*}$.

(\uppercase\expandafter{\romannumeral2}). We give an analysis expression of such $p^{*}$ which is formulated by the dimension of the matrix $A\in\mathbb{R}^{m\times n}$, the eigenvalue of the matrix $A^TA$ and $B\in\mathbb{R}^{m\times r}$.

Our paper is organized as follows. In Section 2, we will present some preliminaries of the null space condition, which plays a core role in the proof of our main theorem, and prove the equivalence relationship between $l_{2,p}$-minimization and $l_{2,0}$-minimization. In Section 3 we focus on proving the another main results of this paper. There we will present an analysis expression of such $p^{*}$ . Finally, we summarize our finding in last section.
\subsection{Notation}
For convenience, for $x \in \mathbb R^n$, we define its support by $support\ (x)=\{i:x_i \neq 0\}$ and the cardinality of set S by $|S|$.
Let $Ker(A)=\{x \in \mathbb R^n:Ax=0\}$ be the null space of matrix A, denote by $\lambda_{min}^{+}(A)$ the minimum nonzero absolute-value eigenvalue of $A^TA$ and by $\lambda_{max}(A)$ the maximum one. We also use the subscript notation $x_S$ to denote such a vector that is equal to $x$ on the index set $S$ and zero everywhere else. and use the subscript notation $X_S$ to denote a matrix whose rows are those of the rows of $X$ that are in the set index S and zero everywhere else. Let $X_{col\ i}$ be the $i$-th column in $X$, and let $X_{row\ i}$ be the $i$-th row in $X$. i.e $X=[X_{col\ 1},X_{col\ 2}\dots X_{col\ r}]=[X_{row\ 1},X_{row\ 2}\dots X_{row\ m}]^T$, for $X\in \mathbb{R}^{n \times r}$. We use $\langle A,B\rangle=tr(A^TB)$ and $\|A\|_F=\sum_{i,j}|a_{ij}|^2$.
\section{EQUIVALENCE RELATIONSHIP BETWEEN\\ $l_{2,p}$-MINIMIZATION AND $l_{2,0}$-MINIMIZATION}

In the single measurement vector (SVM) problem, there exists a sufficient and necessary condition to judge a $k$-sparse vector whether can be recovered by $l_0$-minimization and $l_p$-minimization, namely, the null space condition.
\begin{theorem}{\rm \cite{gribonval2003sparse}}
Given a matrix $A \in R^{m \times n}$ with $m \leq n$, every $x^{\ast}$ with $\|x^{\ast}\|_0=k$  can be recovered by $l_p$-minimization $(0\leq p \leq 1)$  if and only if:
\begin{eqnarray}
\|x_S\|_p < \|x_{S^C}\|_p,
\end{eqnarray}
 for any $x \in Ker(A)$, and set $S \subset \{1,2,3 \dots n\}$ with $|S| \leq |T^{\ast}|$, where $T^{\ast}=support(x^{\ast}).$
\end{theorem}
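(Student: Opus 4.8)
The plan is to establish the two implications of the equivalence separately, working throughout with the $p$-th power $\|\cdot\|_{p}^{p}$ rather than $\|\cdot\|_{p}$; because $t\mapsto t^{p}$ is strictly increasing for $p>0$, the strict inequality $\|x_{S}\|_{p}<\|x_{S^{c}}\|_{p}$ is identical to $\|x_{S}\|_{p}^{p}<\|x_{S^{c}}\|_{p}^{p}$, and the latter is additive over the disjoint blocks $S$ and $S^{c}$. The only analytic input I need is the $p$-triangle inequality for $0<p\le 1$: since $t\mapsto t^{p}$ is concave on $[0,\infty)$ and vanishes at $0$ it is subadditive, so $|a+b|^{p}\le|a|^{p}+|b|^{p}$ and hence the reverse form $|a+b|^{p}\ge|a|^{p}-|b|^{p}$. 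The boundary case $p=0$ goes through verbatim with the counting functional $\|\cdot\|_{0}$ in place of $\|\cdot\|_{p}^{p}$, using $\|u+v\|_{0}\ge\|u\|_{0}-\|v\|_{0}$.

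For sufficiency, assume the null space inequality and let $x^{*}$ have support $T^{*}$ with $|T^{*}|=k$. For any competing feasible point $x$ with $Ax=Ax^{*}$ I write $h=x-x^{*}\in \mathrm{Ker}(A)\setminus\{0\}$ and split $\|x\|_{p}^{p}=\|x^{*}+h\|_{p}^{p}$ across $T^{*}$ and its complement. On $T^{*}$ the reverse $p$-triangle inequality gives $\sum_{i\in T^{*}}|x_{i}^{*}+h_{i}|^{p}\ge\|x^{*}\|_{p}^{p}-\|h_{T^{*}}\|_{p}^{p}$, while off $T^{*}$ one has $x_{i}^{*}=0$, so that block equals $\|h_{(T^{*})^{c}}\|_{p}^{p}$. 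Summing, $\|x\|_{p}^{p}\ge\|x^{*}\|_{p}^{p}+\big(\|h_{(T^{*})^{c}}\|_{p}^{p}-\|h_{T^{*}}\|_{p}^{p}\big)$, and applying the hypothesis with $S=T^{*}$ (admissible since $|S|\le|T^{*}|$) forces the bracket to be strictly positive. Hence $\|x\|_{p}^{p}>\|x^{*}\|_{p}^{p}$, so $x^{*}$ is the unique minimizer.

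For necessity I argue by contraposition. Suppose the null space inequality fails, so there exist $h\in \mathrm{Ker}(A)\setminus\{0\}$ and $S$ with $|S|\le k$ and $\|h_{S}\|_{p}\ge\|h_{S^{c}}\|_{p}$. Set $x^{*}=-h_{S}$, so $\mathrm{supp}(x^{*})\subseteq S$, and let $b=Ax^{*}$. Then $x:=x^{*}+h=h_{S^{c}}$ obeys $Ax=Ax^{*}+Ah=b$, is distinct from $x^{*}$ (as $h\neq0$), and satisfies $\|x\|_{p}=\|h_{S^{c}}\|_{p}\le\|h_{S}\|_{p}=\|x^{*}\|_{p}$. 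Thus this sparse vector $x^{*}$ is not the unique $l_{p}$-minimizer of its own data, contradicting recoverability.

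I expect the one delicate point to be the bookkeeping of the sparsity level in the necessity step: the construction yields $\|x^{*}\|_{0}\le|S|\le k$ rather than exactly $k$, and some entries of $h$ on $S$ may vanish. I would first replace $S$ by $S\cap\mathrm{supp}(h)$, which alters neither $\|h_{S}\|_{p}$ nor $\|h_{S^{c}}\|_{p}$ and so preserves the violated inequality while making every surviving coordinate of $x^{*}$ genuinely nonzero; and I would invoke the monotonicity observation that enlarging $S$ only increases $\|h_{S}\|_{p}$ and decreases $\|h_{S^{c}}\|_{p}$, so a violation on a smaller set propagates to one on a set of the size dictated by $T^{*}$. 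Reconciling the phrasing ``support exactly $k$'' with the at-most-$k$ content of the argument is the only genuinely fussy part; all of the analytic weight rests on the one-line $p$-triangle estimate.
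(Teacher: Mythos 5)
The paper does not prove this statement at all: Theorem~1 is quoted from \cite{gribonval2003sparse} as a known result and is used as a black box, so there is no in-paper argument to compare against. Your proof is the standard null-space-property argument and it is correct. The sufficiency step (split $\|x^*+h\|_p^p$ over $T^*$ and $(T^*)^c$, apply the reverse $p$-triangle inequality on $T^*$, then invoke the hypothesis with $S=T^*$) and the necessity step (from a violating pair $(h,S)$ build $x^*=-h_S$ and the competitor $h_{S^c}$) are exactly the classical proof, and the $p=0$ case does go through verbatim with the counting functional. The one point you flag --- that the theorem is phrased for $\|x^*\|_0$ exactly $k$ while your counterexample has $\|x^*\|_0\le k$ --- is real but easily closed: first enlarge $S$ to size exactly $k$ (which only strengthens the violation, as you note), and if $h_S$ still has fewer than $k$ nonzero entries, add to $x^*=-h_S$ a vector $v$ supported on $S\setminus\mathrm{supp}(h_S)$ with arbitrary nonzero entries there; then $x^*+v$ is exactly $k$-sparse and $x^*+v+h=h_{S^c}+v$ satisfies $\|h_{S^c}+v\|_p^p=\|h_{S^c}\|_p^p+\|v\|_p^p\le\|h_S\|_p^p+\|v\|_p^p=\|x^*+v\|_p^p$, so uniqueness still fails. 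With that patch the argument is complete.
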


Null space condition is widely used in sparse theory, however, this condition only considers a single measurement which we can treat it as the situation that $r=1$ in MMV problem. Furthermore, in \cite{Lai2011The}, the well-known Null space condition has been extended to the situation when $r >1$.
\begin{theorem}{\rm (Theorem 1.3 of \cite{Lai2011The})}
Let A be a real matrix of size $m \times n$ and $S \subseteq \{1,2 \dots n\}$ be a fixed index set. Fixed $p \in [0,1]$ and $r \geq 1$. Then the following condition are equivalent

(a) All $x^{(k)}$ with support in $S$ for $k=1 \dots r$ can be uniquely recovered  by $l_{2,p}$-minimization.

(b) For all vectors $Z=[z^{(1)},z^{(2)} \dots z^{(r)}] \in (N(A))^r \backslash \{(\textbf 0.\textbf 0 \dots \textbf 0)\}$
\begin{eqnarray}
\|Z_S\|_{2,p} < \|Z_{S^C}\|_{2,p}.
\end{eqnarray}

(c) For all vectors $z\in N(A)$, we have $\sum_{j \in S} |z_j|^p< \sum_{j \in S^C} |z_j|^p$.
\end{theorem}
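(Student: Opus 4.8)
The plan is to split the equivalence into the pair $(a)\Leftrightarrow(b)$, which is the matrix analogue of the classical null space property and needs only a row-wise upgrade of the Theorem 1 argument, and the pair $(b)\Leftrightarrow(c)$, which carries the genuinely new content. Within the latter, $(b)\Rightarrow(c)$ is immediate, whereas $(c)\Rightarrow(b)$ --- passing from a scalar condition on a single null-space vector to a mixed-norm condition on an $r$-tuple --- is where the real work lies.

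For $(a)\Leftrightarrow(b)$ I would argue as in the single measurement case. Assume $X^{\ast}$ has every column supported in $S$ and $AX^{\ast}=B$; any other feasible matrix $X$ yields $Z:=X-X^{\ast}$ whose columns all lie in $N(A)$, since $AZ=0$. Splitting $\|X^{\ast}+Z\|_{2,p}^p$ over the rows indexed by $S$ and by $S^C$, applying the triangle inequality to the $\ell_2$ norm of each row in $S$, and then using the subadditivity $(a+b)^p\le a^p+b^p$ valid for $0<p\le1$, I obtain $\|X^{\ast}+Z\|_{2,p}^p \ge \|X^{\ast}\|_{2,p}^p-\|Z_S\|_{2,p}^p+\|Z_{S^C}\|_{2,p}^p$, so that condition $(b)$ forces $\|X\|_{2,p}>\|X^{\ast}\|_{2,p}$ for every $Z\neq 0$ and gives unique recovery. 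Conversely, if $(b)$ fails there is a nonzero $Z\in(N(A))^r$ with $\|Z_S\|_{2,p}\ge\|Z_{S^C}\|_{2,p}$; taking $X^{\ast}:=Z_S$, which is supported in $S$, and the competitor $\widetilde{X}:=X^{\ast}-Z=-Z_{S^C}$, which is feasible because $AZ=0$, shows $X^{\ast}$ is not the unique minimizer, contradicting $(a)$.

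For $(b)\Rightarrow(c)$ I would simply take any nonzero $z\in N(A)$ and apply $(b)$ to the matrix $Z=[z,0,\dots,0]$, whose $i$-th row has $\ell_2$ norm $|z_i|$; this reads off exactly the scalar inequality of $(c)$. The reverse implication $(c)\Rightarrow(b)$ is the crux, and the device I would use is the integral representation of a power of the Euclidean norm over the unit sphere $\mathbb S^{r-1}\subset\mathbb R^r$: for every $v\in\mathbb R^r$ one has $\|v\|_2^p=c_{p,r}\int_{\mathbb S^{r-1}}|\langle v,u\rangle|^p\,d\sigma(u)$ with a constant $c_{p,r}>0$ depending only on $p$ and $r$, which follows from the rotational invariance of the surface measure $\sigma$. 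Writing $\langle Z_{row\ i},u\rangle=(Zu)_i$ with $Zu=\sum_{k=1}^r u_k z^{(k)}\in N(A)$, I would apply the scalar inequality of $(c)$ to the null-space vector $Zu$ and integrate over $u\in\mathbb S^{r-1}$. After exchanging the finite sum with the integral, the left side collapses to $\|Z_S\|_{2,p}^p$ and the right side to $\|Z_{S^C}\|_{2,p}^p$, the common factor $c_{p,r}$ cancels, and $(b)$ follows.

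The main obstacle is securing strictness in $(c)\Rightarrow(b)$: condition $(c)$ delivers a strict inequality only for those $u$ with $Zu\neq0$, so I must verify that $\{u\in\mathbb S^{r-1}:Zu=0\}$ has zero measure. This holds because $Z\neq0$ makes the linear map $u\mapsto Zu$ nonzero, so its kernel is a proper subspace of $\mathbb R^r$ and meets the sphere in a set of vanishing $\sigma$-measure; strictness on a set of full measure then promotes the integral inequality to a strict one. A secondary point is the endpoint $p=0$, where the norm power degenerates; there I would verify $(c)\Rightarrow(b)$ directly in its combinatorial form or recover it as the limit $p\to0^{+}$ of the argument above.
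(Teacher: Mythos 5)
Note first that the paper does not prove this statement at all: it is imported verbatim as Theorem~1.3 of Lai and Liu, so there is no in-paper argument to compare yours against. Your proof is essentially correct and is, in substance, the standard route taken in the source literature: the $(a)\Leftrightarrow(b)$ equivalence is the row-wise null space property argument using the reverse triangle inequality together with subadditivity of $t\mapsto t^{p}$, $(b)\Rightarrow(c)$ is the specialization $Z=[z,0,\dots,0]$, and the crux $(c)\Rightarrow(b)$ is handled by the spherical averaging identity $\|v\|_{2}^{p}=c_{p,r}\int_{\mathbb S^{r-1}}|\langle v,u\rangle|^{p}\,d\sigma(u)$ applied to the null-space vectors $Zu$, with the measure-zero kernel argument correctly supplying strictness; this averaging device is exactly the mechanism used by van den Berg--Friedlander for $p=1$ and by Lai--Liu for general $p$. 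Two small points you should tidy up: condition $(c)$ must of course be read for \emph{nonzero} $z\in N(A)$ (otherwise $z=0$ falsifies it), and the endpoint $p=0$ needs separate combinatorial treatment not only in $(c)\Rightarrow(b)$ (where a generic linear combination $z=Zu$ realizes the full row support of $Z$ and reduces to the scalar statement) but also in $(a)\Leftrightarrow(b)$, where the inequality $\|X^{\ast}+Z\|_{2,0}\ge\|X^{\ast}\|_{2,0}-\|Z_{S}\|_{2,0}+\|Z_{S^{C}}\|_{2,0}$ must be verified by counting nonzero rows rather than by the $t^{p}$ subadditivity you invoke for $p>0$.
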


It is worth pointing out that Theorem 2 not only provides us a sufficient and necessary condition of MMV's version, but also proves the equivalence relationship between the situations when $r=1$ and $r>1$.

According to Theorem 2, we can get the following corollary which is very easy to be proved.
\begin{corollary}
Given a matrix $A\in \mathbb{R}^{m\times n}$, if every $X^{*}\in \mathbb{R}^{m\times r}$ with $\|X^{*}\|_{2,0}=k$ can be recovered by $l_{2,0}$-minimization, then we have the following conclusion.

(a) For any $X\in (N(A))^r \backslash \{(\textbf 0,\textbf 0 \dots \textbf 0)\}$, we have that $\|X\|_{2,0} \geq 2k+1$.

(b) we have that $\displaystyle k \leq \lceil \frac{n-2.5}{2} \rceil+1$, where $\lceil a \rceil$ represents the integer part of $a$.

(c) The number of measurements needed to recovery every $k$-sparse matrices always satisfies $m\geq 2k$, furthermore, $\displaystyle k \leq \lceil \frac{m}{2} \rceil$.
\end{corollary}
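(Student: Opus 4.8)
The plan is to obtain all three parts as direct consequences of Theorem 2 specialized to $p=0$. The hypothesis that every $X^{*}$ with $\|X^{*}\|_{2,0}=k$ is recoverable means that for \emph{every} index set $S$ with $|S|=k$, condition (b) of Theorem 2 holds; at $p=0$ this reads $\|Z_S\|_{2,0} < \|Z_{S^C}\|_{2,0}$ for all $Z \in (N(A))^r \setminus \{(\mathbf 0,\dots,\mathbf 0)\}$. The substantive step is part (a); parts (b) and (c) then follow by exhibiting one low-support null tuple and counting dimensions.

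For part (a) I would fix an arbitrary nonzero $Z \in (N(A))^r$, write $T=\mathrm{support}(Z)$ and $t=\|Z\|_{2,0}=|T|$, and rule out $t\le 2k$. If $t\le k$, I would choose a set $S$ with $T\subseteq S$ and $|S|=k$; then $\|Z_S\|_{2,0}=t\ge 1$ while $\|Z_{S^C}\|_{2,0}=0$, contradicting the strict inequality. If instead $k<t\le 2k$, I would choose $S\subseteq T$ with $|S|=k$, so that $\|Z_S\|_{2,0}=k$ and $\|Z_{S^C}\|_{2,0}=t-k\le k$; the inequality $k<t-k$ then forces $t>2k$, again a contradiction. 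Hence $t\ge 2k+1$. The points requiring care are the correct identification of the $p\to 0$ limit with the row-counting quasinorm $\|\cdot\|_{2,0}$ and the fact that the hypothesis supplies the null space inequality for sets of size exactly $k$, which is why I enlarge $T$ to a size-$k$ set in the first case rather than testing with $T$ itself.

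For part (b) I would manufacture a single witness: since $A$ is underdetermined, $N(A)\neq\{\mathbf 0\}$, so I can take a nonzero $z\in N(A)$ and set $Z=[z,\mathbf 0,\dots,\mathbf 0]$, which lies in $(N(A))^r\setminus\{(\mathbf 0,\dots,\mathbf 0)\}$ with $\|Z\|_{2,0}=|\mathrm{support}(z)|\le n$. Part (a) then gives $2k+1\le\|Z\|_{2,0}\le n$, i.e. $k\le (n-1)/2$. It remains to match $\lfloor (n-1)/2\rfloor$ with the stated expression; a short parity check (treating $n$ even and $n$ odd separately) confirms $\lfloor (n-1)/2\rfloor=\lceil (n-2.5)/2\rceil+1$ under the paper's convention that $\lceil\cdot\rceil$ denotes the integer part.

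For part (c) I would replace the trivial bound $|\mathrm{support}(z)|\le n$ by a sharper one coming from the rank. Setting $\rho=\mathrm{rank}(A)\le m$, any $\rho+1$ columns of $A$ are linearly dependent, producing a nonzero null vector supported on at most $\rho+1\le m+1$ coordinates; embedding it in one column as above and applying part (a) yields $2k+1\le m+1$, hence $m\ge 2k$ and $k\le m/2=\lceil m/2\rceil$ in the paper's notation. I expect the main obstacle to lie not in any single inequality but in the bookkeeping at the boundaries: correctly specializing the null space condition to $p=0$, guaranteeing that $N(A)$ is nontrivial so the witnesses in (b) and (c) exist, and reconciling the floor quantities with the paper's unconventional ceiling/integer-part notation.
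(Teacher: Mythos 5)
Your proposal is correct and follows essentially the same route as the paper: part (a) is the $p=0$ null space condition of Theorem~2 (your two-case argument just fills in the ``it is easy to get'' step), part (b) is the same $n\ge 2k+1$ bound plus the parity check, and part (c) is the same rank-counting argument, merely phrased via ``any $\mathrm{rank}(A)+1$ columns are dependent'' rather than the paper's ``every null vector has support $\ge 2k+1$, so $2k$-column submatrices are injective.'' The only nitpick is your closing ``$k\le m/2=\lceil m/2\rceil$,'' which should invoke integrality of $k$ before passing to the integer part when $m$ is odd.
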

\begin{proof}

(a) According to Theorem 2, for any $X\in (N(A))^r \backslash \{(\textbf 0,\textbf 0 \dots \textbf 0)\}$ and $|S|\leq k$, we have that
 \begin{eqnarray}
 \|X_S\|_{2,0} < \|X_{S^C}\|_{2,0},
 \end{eqnarray}
  and it is easy to get that
  \begin{eqnarray}
   \|X\|_{2,0} \geq 2k+1.
\end{eqnarray}
(b) According to the proof of (a), we have that $n \geq \|X\|_{2,0}\geq 2k+1$. Due to the integer-value of $k$, we have that $k\leq (n-1)/2 $ when $n$ is an odd number, similarly, we get that $k\leq (n-2)/2$ when $n$ is an even number.

In brief, we get that $\displaystyle k \leq \lceil \frac{n-2.5}{2} \rceil+1$, where $\lceil a \rceil$ represents the integer part of $a$.

(c) For any $\tilde{x}\in N(A)\setminus \{\textbf 0\}$, we consider $\tilde{X}=[\tilde{x},\tilde{x}\ldots \tilde{x}]\in (N(A))^r$.

According to the proof of (a), it is obvious that $\|\tilde{x}\|_0=\|\tilde{X}\|_{2,0}\geq 2k+1$, such that the sub-matrix $A_S$ is an invertible matrix, where $S=support(\tilde{x})$. Therefore, we can get that $2k\leq rank(A)\leq m<n$. Due to the integer-value virtue of $k$, we also can say that $k \leq \lceil \frac{m}{2} \rceil$.
\end{proof}
In order to clear further the meaning of new version null space condition and use it more conveniently, it is necessary to introduce a new concept named M-null space constant (M-NSC).
\begin{definition}
Given an underdetermined matrix $A\in \mathbb{R}^{m\times n},$ for every $p\in [0,1]$ and a positive integer $k$, the M-null space constant $h(p,A,r,k)$ is the smallest number such that,
$$\|X_S\|_{2,p}^p \leq h(p,A,r,k) \|X_{S^C}\|_{2,p}^p, \ when \ 0<p\leq1,$$
and
$$\|X_S\|_{2,0} \leq h(0,A,r,k) \|X_{S^C}\|_{2,0}, \ when \ p=0,$$
for every index set $S \subset \{1,2,\ldots,n\}$ with $|S| \leq k$ and every $X \in (Ker(A))^r\backslash \{(\textbf 0,\textbf 0\dots \textbf 0)\}.$
\end{definition}
%which is the most important advantage of NSP. However, NSP is difficult to be checked for a given matrix. In order to reach our goal, we recall the concept null space constant (NSC), which is closely related to NSP and will offer tremendous help in illustrating the performance of $l_0$-minimization (1) and $l_p$-minimization (4).

According to the definition of M-NSC, it is easy to get the following corollary which is also very easy to be proved and we leave the proof to readers.
\begin{corollary}
 Every $k$-sparse matrix $X\in \mathbb R ^{n\times r}$ can be recovered by $l_{2,p}$-minimization if and only if $h(p,A,r,k)<1$.
\end{corollary}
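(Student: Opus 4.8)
The statement is an equivalence, and the natural strategy is to use it as a bridge between Definition~1 (the defining inequality for the M-null space constant $h(p,A,r,k)$) and condition (b) of Theorem~2 (the MMV null space condition). The plan is therefore to show that the assertion $h(p,A,r,k)<1$ is logically the same as the strict null space inequality $\|Z_S\|_{2,p}<\|Z_{S^C}\|_{2,p}$ holding for every admissible $S$ and every nonzero $Z\in(Ker(A))^r$, after which Theorem~2 delivers recovery for free. I will treat $0<p\le1$ first and record the trivial modifications for $p=0$ at the end.

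For the direction $h(p,A,r,k)<1\Rightarrow$ recovery, I would argue directly from Definition~1. If $h:=h(p,A,r,k)<1$, then for every $S$ with $|S|\le k$ and every $X\in(Ker(A))^r\setminus\{\textbf 0\}$ we have $\|X_S\|_{2,p}^p\le h\,\|X_{S^C}\|_{2,p}^p$. When $X_{S^C}=\textbf 0$ the right-hand side is zero, forcing $X_S=\textbf 0$ and hence $X=\textbf 0$, which is excluded; so $\|X_{S^C}\|_{2,p}^p>0$ and the bound gives the strict inequality $\|X_S\|_{2,p}^p<\|X_{S^C}\|_{2,p}^p$. Taking $p$-th roots (a strictly increasing operation) yields exactly condition (b) of Theorem~2, and therefore every $k$-sparse matrix is recovered by $l_{2,p}$-minimization.

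The reverse direction, recovery $\Rightarrow h(p,A,r,k)<1$, is where the real work lies. Theorem~2 gives, for each fixed $S$ with $|S|\le k$ and each nonzero $X$, the strict inequality $\|X_S\|_{2,p}^p<\|X_{S^C}\|_{2,p}^p$; however $h$ is the supremum of the ratios $R_S(X)=\|X_S\|_{2,p}^p/\|X_{S^C}\|_{2,p}^p$, and pointwise strictness alone does not preclude the supremum from equalling $1$. The key step is a compactness argument: under the recovery hypothesis $\|X_{S^C}\|_{2,p}^p$ never vanishes on $(Ker(A))^r\setminus\{\textbf 0\}$, so each $R_S$ is continuous there, and since $R_S$ is invariant under the scaling $X\mapsto\lambda X$ I may restrict it to the compact set $\{X\in(Ker(A))^r:\|X\|_F=1\}$. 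On this compact set $R_S$ attains its maximum at some $X^{\ast}$, and that maximum equals $R_S(X^{\ast})<1$ by the strict inequality. Since there are only finitely many index sets $S$ with $|S|\le k$, taking the maximum over them keeps the value strictly below $1$, which is precisely $h(p,A,r,k)<1$.

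Finally, for $p=0$ the same outline applies but the analysis simplifies: the quantities $\|X_S\|_{2,0}$ and $\|X_{S^C}\|_{2,0}$ are integers bounded by $n$, so the ratios $R_S(X)$ take only finitely many values and the supremum is automatically attained and strictly less than $1$. I expect the main obstacle to be precisely the compactness step in the reverse direction---promoting the family of strict, pointwise inequalities supplied by Theorem~2 into a single uniform gap below $1$---together with the preliminary check that the denominator $\|X_{S^C}\|_{2,p}^p$ cannot degenerate to zero, which is what makes the ratio $R_S$ continuous and the argument legitimate.
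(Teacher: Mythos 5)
Your proof is correct. The paper does not actually supply a proof of this corollary (it is explicitly ``left to readers''), so there is nothing to compare against, but your argument is the natural and intended one: the forward direction is an immediate consequence of Definition~1 together with condition (b) of Theorem~2, and you correctly identify that the only real content is in the reverse direction, where the pointwise strict inequalities from Theorem~2 must be promoted to a uniform gap via scale-invariance of the ratio, non-vanishing of the denominator, compactness of the unit sphere of $(Ker(A))^r$, and finiteness of the collection of index sets $S$. Your separate treatment of $p=0$ (where the ratio takes only finitely many rational values, each below $1$) is also sound; one can even note directly that $\|X_S\|_{2,0}\le k$ and $\|X_{S^C}\|_{2,0}\ge \|X_S\|_{2,0}+1$ give the explicit bound $h(0,A,r,k)\le k/(k+1)$.
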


As shown in Corollary 2, M-NSC provides us a sufficient and necessary condition of the solution to $l_{2,0}$-minimization and $l_{2,p}$-minimization, and it is important for proofing the equivalence relationship between $l_{2,0}$-minimization and $l_{2,p}$-minimization. Furthermore, we emphasize a few important properties of $h(p,A,r,k)$.
\begin{proposition}
The M-NSC as defined in Definition 1, is nondecreasing in $p \in [0,1].$
\end{proposition}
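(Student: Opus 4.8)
The plan is to reduce the whole statement to a monotonicity property of the vector of row-norms and then to a single elementary pairwise inequality. Writing $c_i=\|X_{row\ i}\|_2\ge 0$ for $X\in (Ker(A))^r$, we have $\|X_S\|_{2,p}^p=\sum_{i\in S}c_i^p$ for every $p\in(0,1]$, and $\|X_S\|_{2,0}=\sum_{i\in S}\mathbf 1[c_i\neq 0]$ at $p=0$ under the convention $c_i^0=\mathbf 1[c_i\neq 0]$. With this notation the defining inequality of Definition 1 says precisely that
\[
h(p,A,r,k)=\sup_{X}\ \max_{|S|\le k}\ \frac{\sum_{i\in S}c_i^p}{\sum_{j\in S^C}c_j^p},
\]
where $X$ ranges over $(Ker(A))^r\setminus\{(\mathbf 0,\ldots,\mathbf 0)\}$. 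The first thing I would record is that, for a fixed $X$, the inner maximum over $|S|\le k$ is attained at a threshold set $S^{\ast}=S^{\ast}(X)$ consisting of $k$ indices carrying the largest values $c_i$, and, crucially, that this set may be chosen independently of $p$: the ratio $N/(T-N)$ is increasing in $N=\sum_{i\in S}c_i^p$, and the ordering of the numbers $c_i^p$ coincides with the ordering of the $c_i$ for every $p\ge 0$, so one and the same top-$k$ set is optimal for all $p$ at once.

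The heart of the argument is a pointwise claim: for each fixed $X$, with $S=S^{\ast}(X)$ held fixed, the function
\[
g_X(p)=\frac{\sum_{i\in S}c_i^p}{\sum_{j\in S^C}c_j^p}
\]
is nondecreasing on $[0,1]$. I would prove this by clearing denominators: for $0\le p_1\le p_2\le 1$ the inequality $g_X(p_1)\le g_X(p_2)$ is equivalent to
\[
\sum_{i\in S,\ j\in S^C}\bigl(c_i^{p_2}c_j^{p_1}-c_i^{p_1}c_j^{p_2}\bigr)\ge 0 .
\]
Each summand factors as $c_i^{p_1}c_j^{p_1}\bigl(c_i^{p_2-p_1}-c_j^{p_2-p_1}\bigr)$, which is nonnegative because the threshold property gives $c_i\ge c_j\ge 0$ whenever $i\in S$ and $j\in S^C$, while $t\mapsto t^{p_2-p_1}$ is nondecreasing on $[0,\infty)$; the same factorization, read with the convention $c^0=\mathbf 1[c\neq 0]$, also covers the endpoint $p_1=0$ and thus handles $\|\cdot\|_{2,0}$. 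Since $h(p,A,r,k)=\sup_X g_X(p)$ is a pointwise supremum of functions each nondecreasing in $p$, it is itself nondecreasing on $[0,1]$, which is the assertion of the proposition.

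The step I expect to require the most care is the reduction showing that the ratio-maximizing $S$ can be taken independent of $p$. Without it one cannot factor the cross-multiplied expression termwise, because the pairwise inequality $c_i\ge c_j$ for $i\in S,\ j\in S^C$ is exactly the threshold property and fails for an arbitrary $S$. I would also attend to two degeneracies. Rows with $c_i=0$ are harmless: the convention $c^0=\mathbf 1[c\neq 0]$ matches $\|\cdot\|_{2,0}$ and forces the corresponding summands to vanish. More delicate is the possibility that $\sum_{j\in S^C}c_j^p=0$ for some admissible $X$, i.e.\ that $X$ is supported on at most $k$ rows; in that event the ratio is $+\infty$ for every $p$ simultaneously, since the support of $X$ (hence which rows vanish) does not depend on $p$, so $h\equiv+\infty$ and the claimed monotonicity holds trivially. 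In the remaining, genuinely interesting regime the denominators are strictly positive and the supremum-of-monotone-functions argument applies verbatim.
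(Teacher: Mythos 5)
Your proof is correct and follows essentially the same route as the paper: both reduce the supremum over $S$ to the top-$k$ (threshold) set chosen independently of $p$, and both rest on the same pairwise inequality $c_i^{p_2}c_j^{p_1}\ge c_i^{p_1}c_j^{p_2}$ for $c_i\ge c_j$. Your single cross-multiplication step merely merges the paper's two-stage argument (first comparing $p$ with $1$, then substituting $p\mapsto pq$) into one, and your explicit treatment of the endpoint $p=0$ and of the degenerate case $\|X\|_{2,0}\le k$ is if anything more careful than the original.
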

\begin{proof}
The proof is divided into two steps.

Step 1: To prove $h(p,A,r,k) \leq h(1,A,r,k)$, for any $p\in [0,1].$

For any $X \in (N(A))^r \backslash \{(\textbf 0,\textbf 0 \dots \textbf 0)\}$, without of generality, we assume that $\|X_{row \ 1}\|_2 \geq \|X_{row \ 2}\|_2 \geq \dots \|X_{row \ n}\|_2 $.

We define a function $\theta(p,X,k)$ as
\begin{eqnarray}
\theta(p,X,k)= \frac{\sum_{i=1}^k \|X_{row \ i}\|_2^p}{\sum_{i=k+1}^n\|X_{row \ i}\|_2^p},
\end{eqnarray}
 then it is easy to get that the definition of $h(p,A,r,k)$ is equivalent to
\begin{eqnarray}
h(p,A,r,k)=\mathop{\max}\limits_{|S| \leq k} \mathop{\sup}\limits_{X \in (N(A))^r \backslash \{(\textbf 0,\textbf 0 \dots \textbf 0 )\}} \ \theta (p,X,k)
\end{eqnarray}

For any $p \in [0,1]$, the function $f(t)=\frac {t^p}{t} \ (t>0)$ is a non-increasing function. For any $j \in \{k+1,\dots n \}$ and $i \in \{1,2\dots k\}$, we have that,
\begin{eqnarray}\label{T1}
\frac{\|X_{row \ j}\|_2^p}{\|X_{row \ j}\|_2} \geq \frac {\|X_{row \ i}\|_2^p}{\|X_{row \ i}\|_2}.
\end{eqnarray}

We can rewrite inequalities (\ref{T1}) into
\begin{eqnarray}
\frac{\|X_{row \ i}\|_2^p}{\|X_{row \ j}\|_2^p} \leq \frac {\|X_{row \ i}\|_2}{\|X_{row \ j}\|_2}.
\end{eqnarray}

Therefore, we can get that
\begin{eqnarray}
\frac{\sum_{i=1}^k\|X_{row \ i}\|_2^p}{\|X_{row \ j}\|_2^p} \leq \frac {\sum_{i=1}^k\|X_{row \ i}\|_2}{\|X_{row \ j}\|_2}.
\end{eqnarray}

We can conclude that
\begin{eqnarray}
\frac{\sum_{j=k+1}^n\|X_{row \ j}\|_2^p}{\sum_{i=1}^k\|X_{row \ i}\|_2^p} \geq \frac {\sum_{j=k+1}^n\|X_{row \ j}\|_2}{\sum_{i=1}^k\|X_{row \ i}\|_2}.
\end{eqnarray}
such that $\frac {1}{\theta (p,X,k)} \geq\frac{1}{\theta (1,X,k)}$. i.e., $\theta (p,X,k) \leq \theta (1,X,k)$.

Because $h(p,A,r,k)=\mathop{\max}\limits_{|S| \leq k} \mathop{\sup}\limits_{X \in (N(A))^r \backslash \{(\textbf 0,\textbf 0 \dots )\}} \ \theta (p,X,k)$, we can get that $h(p,A,r,k) \leq h(1,A,r,k)$.

Step 2: To prove $h(pq,A,r,k) \leq h(p,A,r,k)$ for any $p\in [0,1]$ and $q\in(0,1).$

According to the definition of $\theta(p,X,k)$ in Step 1, we have that
\begin{eqnarray}
\theta(pq,X,k)=\frac{\sum_{i=1}^k\|X_{row \ i}\|_2^{pq}}{\sum_{j=k+1}^n\|X_{row \ j}\|_2^{pq}}=\frac{\sum_{i=1}^k(\|X_{row \ i}\|_2^{p})^q}{\sum_{j=k+1}^n(\|X_{row \ j}\|_2^{p})^q} \leq \frac {\sum_{i=1}^n\|X_{row \ i}\|_2^p}{\sum_{j=k+1}^n\|X_{row \ j}\|_2^p}.
\end{eqnarray}

It needs to be pointed out that we have prove the fact in Step 1, that
\begin{eqnarray}
\frac{\sum_{i=1}^k|u_i|^p}{\sum_{j=k+1}^n|u_j|^p} \leq \frac{\sum_{i=1}^k|u_i|}{\sum_{j=k+1}^n|u_j|},
\end{eqnarray}
for any $|u_1|\geq|u_2|\dots \geq |u_n|.$

Therefore, we can get that $\theta(pq,X,k)\leq \theta(p,X,k)$, in other words, $\theta(p_1,X,k)\leq \theta(p_2,X,k)$ as long as $p_1 \leq p_2$.

Because $h(p,A,r,k)=\mathop{\max}\limits_{|S| \leq k} \mathop{\sup}\limits_{X \in (N(A))^r \backslash \{0,0 \dots 0\}} \ \theta (p,X,k)$, so we can get that $h(p,A,r,k)$ is nondecreasing in $p \in [0,1].$

The proof is completed.
\end{proof}
\begin{figure}
  \center
  \includegraphics[width=4in]{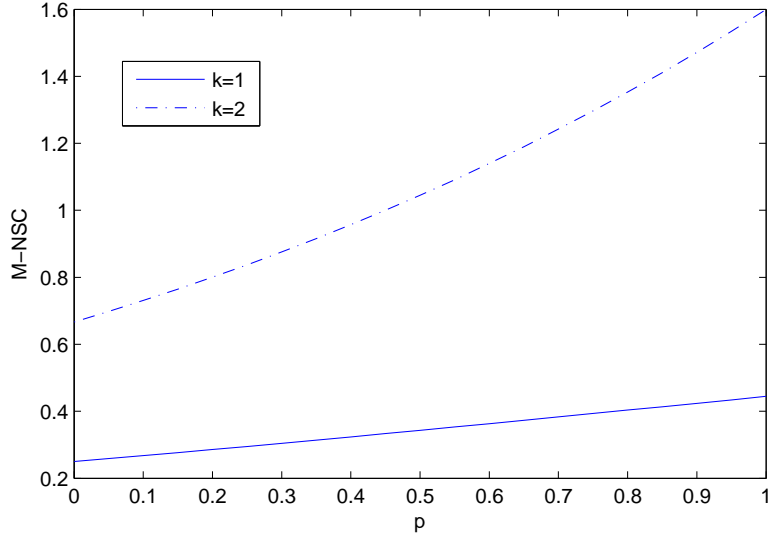}
  \caption{M-NSC in Example 1}
  \label{referencename}
  \end{figure}
\begin{proposition}
The M-NSC as defined in Definition 1, is a continuous function in $p \in [0,1]$.
\end{proposition}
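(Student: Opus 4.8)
The plan is to rewrite $h$ as a supremum over a compact set and then prove lower and upper semicontinuity separately; the upper bound is routine on $(0,1]$ but requires a support-counting estimate at the endpoint $p=0$, which is the only genuine difficulty. First I would set up a compact variational formula. Because the ratio $\theta(p,X,k)$ in (12) is invariant under the scaling $X\mapsto tX$ with $t\ne 0$, the supremum in (13) may be taken over the compact set $K=\{X\in(N(A))^r:\|X\|_F=1\}$. I would also isolate the only nondegenerate regime: writing $s_{\min}=\min\{\|X\|_{2,0}:X\in(N(A))^r\setminus\{(\mathbf 0,\dots,\mathbf 0)\}\}$, one checks directly that if $s_{\min}\le k$ then some nonzero $X$ is supported on a set of size $\le k$, the denominator $\sum_{i=k+1}^n\|X_{row\ i}\|_2^p$ vanishes and $h\equiv+\infty$ is (trivially) continuous; so I may assume $s_{\min}>k$, in which case every $X\in K$ has more than $k$ nonzero rows, the denominator is strictly positive, $h(p)<\infty$, and the maximizer of $\theta(p,\cdot,k)$ is attained on $K$ for each $p\in(0,1]$.

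Lower semicontinuity at any $p_0\in[0,1]$ is immediate. Given $\varepsilon>0$, pick $X_0\in K$ with $\theta(p_0,X_0,k)>h(p_0)-\varepsilon$; for this fixed $X_0$ the map $p\mapsto\theta(p,X_0,k)$ is a ratio of finite sums of the continuous functions $p\mapsto\|(X_0)_{row\ i}\|_2^p$ (taking the value $k/(\|X_0\|_{2,0}-k)$ at $p=0$, since $\|X_0\|_{2,0}\ge s_{\min}>k$), hence is continuous at $p_0$. Therefore $h(p)\ge\theta(p,X_0,k)>h(p_0)-\varepsilon$ for $p$ close to $p_0$, which gives $\liminf_{p\to p_0}h(p)\ge h(p_0)$.

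For upper semicontinuity I would take $p_j\to p_0$ with maximizers $X^{(j)}\in K$ of $\theta(p_j,\cdot,k)$; by compactness, along a subsequence $X^{(j)}\to\bar X\in K$, and since there are finitely many choices the index set $T$ of the $k$ largest row norms is the same for all $j$, so that $\theta(p_j,X^{(j)},k)=(\sum_{i\in T}a_i^{(j)})/(\sum_{i\notin T}a_i^{(j)})$ with $a_i^{(j)}=\|X^{(j)}_{row\ i}\|_2^{p_j}$. When $p_0\in(0,1]$ one has $a_i^{(j)}\to\|\bar X_{row\ i}\|_2^{p_0}$ (joint continuity of $(p,t)\mapsto t^p$ holds at $p_0>0$, even where $\bar X_{row\ i}=0$), so the ratio converges to a quantity $\le\theta(p_0,\bar X,k)\le h(p_0)$, and this case is done. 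The genuine obstacle is $p_0=0$: there $\|X^{(j)}_{row\ i}\|_2\to 0$ does not force $a_i^{(j)}\to 0$ (the limit may be any number in $[0,1]$), so joint continuity is lost and a shrinking denominator could a priori inflate the ratio. I would resolve this by counting: each surviving row ($\|\bar X_{row\ i}\|_2>0$) contributes a factor tending to $1$ while each vanishing row contributes a limit in $[0,1]$, so the numerator tends to at most $|T|=k$ and the denominator to at least the number of nonzero rows of $\bar X$ lying outside $T$. The decisive observation is that $\bar X\in K$ is nonzero, hence $\|\bar X\|_{2,0}\ge s_{\min}$ while at most $k$ of its nonzero rows can lie in $T$, so the denominator limit is at least $s_{\min}-k>0$. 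Thus $\limsup_{p\to 0^+}\theta(p_j,X^{(j)},k)\le k/(s_{\min}-k)=h(0,A,r,k)$, i.e. $\limsup_{p\to 0^+}h(p)\le h(0)$. Combining the two one-sided bounds yields $\lim_{p\to p_0}h(p)=h(p_0)$ at every $p_0\in[0,1]$, proving continuity.
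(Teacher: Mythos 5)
Your proof is correct and follows the same skeleton as the paper's: reduce to a compact normalized slice of $(N(A))^r$, note that the supremum is attained, and establish the two one-sided limits by (i) fixing a near-maximizer and using continuity of $p\mapsto\theta(p,X_0,k)$ for lower semicontinuity, and (ii) passing to a convergent subsequence of maximizers (with a fixed index set, by finiteness) for upper semicontinuity. Where you genuinely depart from, and improve on, the paper is at the endpoint $p_0=0$. The paper's Step 3 simply asserts $\theta(p_{n_i},X_{n_i},S')\to\theta(p_0,X',S')$, which is justified by joint continuity of $(p,t)\mapsto t^p$ only when $p_0>0$; at $p_0=0$ a row norm tending to $0$ at a rate like $e^{-1/p_j}$ gives $\|X^{(j)}_{row\ i}\|_2^{p_j}\not\to 0$, so the asserted convergence can fail and the paper's argument has a gap there. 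Your counting repair --- numerator bounded by $k$ since all row norms are at most $1$ on the unit Frobenius sphere, denominator bounded below in the limit by $s_{\min}-k>0$ via lower semicontinuity of the support size, and the identification $h(0,A,r,k)=k/(s_{\min}-k)$ --- closes exactly this gap and yields the needed inequality $\limsup_{p\to 0^+}h(p)\le h(0)$ rather than an unjustified equality. You also correctly dispose of the degenerate case $s_{\min}\le k$ and use the sensible normalization $\|X\|_F=1$ (the paper's set $V$ with every row of unit norm is evidently a typo, as it is generally empty in $(N(A))^r$). The only cosmetic caveat is that your formula $h(0)=k/(s_{\min}-k)$ presupposes $N(A)\neq\{0\}$ and $s_{\min}>k$, both of which you have already arranged; nothing of substance is missing.
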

\begin{proof}

As been proved in Proposition 1, $h(p,A,r,k)$ is nondecreasing in $p \in [0,1]$, such that there is jump discontinuous if $h(p,A,r,k)$ is discontinuous at a point. Therefore, it is enough to prove that it is impossible to have jump discontinuous points of $h(p,A,r,k)$.

For convenience, we still use $\theta (p,X,S)$ which is defined in proof of Proposition 1, and the following proof is divided into three steps.

Step 1. To prove that there exist $X\in (N(A))^r$ and a set $S \subset \{1,2\dots n\}$ such that $\theta (p,X,S)=h(p,A,r,k)$.

Let $V=\{X\in ((N(A))^r):\|X_{row \ i}\|_2=1,\ i=1,2\dots n\}$, and it is easy to get that $h(p,A,r,k)=\mathop{\max}\limits_{|S| \leq k} \mathop{\sup}\limits_{X \in V} \theta (p,X,S)$

It needs to be pointing out that, the choice of the set $S\subset \{1,2\dots n\}$ with $|S|\leq k$  is limited, so there exists a set $S^{'}$ with $|S^{'}| \leq k$ such that $h(p,A,r,k)=\mathop{\sup}\limits_{X \in V} \theta (p,X,S^{'}).$

On other hand, $\theta (p,X,S^{'})$ is obviously continuous in $X$ on $V$. Because of the compactness of $V$, there exists $X^{'}\in V$ such that $h(p,A,r,k)=\theta (p,X^{'},S^{'})$.

Step 2. To prove that $\mathop{\lim}\limits_{p \to p_0^{-}} h(p,A,r,k)=h(p_0,A,r,k).$

We assume that $\mathop{\lim}\limits_{p \to p_0^{-}} h(p,A,r,k) \ne h(p_0,A,r,k)$. According to Proposition 1, $h(p,A,r,k)$ is nondecreasing in $p \in [0,1]$, therefore, we can get a sequence of $\{p_n\}$ with $p_n \to p_0^{-}$ such that
\begin{eqnarray}\label{T2}
\mathop{\lim}\limits_{p_n \to p_0^{-}} h(p_n,A,r,k)=M<h(p_0,A,r,k).
\end{eqnarray}

According to the proof in Step 1, there exists $X^{'}\in (N(A))^r$ and $S \subset \{1,2\dots n\}$ such that $h(p_0,A,r,k)=\theta (p_0,X^{'},S^{'})$. It is easy to get that
\begin{eqnarray}
 \mathop{\lim}\limits_{p \to p_0^{-}} \theta(p_n,X,S^{'})=\theta (p,X^{'},S^{'})=h(p_0,A,r,k).
\end{eqnarray}

According to the definition of $\theta (p,X,S)$, it is obvious that
\begin{eqnarray}\label{T3}
h(p_n,A,r,k) \geq \theta(p_n,X^{'},S^{'}),
\end{eqnarray}
however, (\ref{T2}) and (\ref{T3}) contradict each other.

Therefore, we have that $\mathop{\lim}\limits_{p \to p_0^{-}} h(p,A,r,k)=h(p_0,A,r,k)$.

Step 3. To prove that $\mathop{\lim}\limits_{p \to p_0^{+}} h(p,A,r,k)=h(p_0,A,r,k)$, for any $p_0\in [0,1).$

We consider a sequence of $\{p_n\}$ with $p_0 \leq p_n<1$ and $p \to p_0^{+}.$

According to Step 1, there exist $X_n \in V$ and $|S_n|\leq k$ such that
\begin{eqnarray}
h(p_n,A,r,k)=\theta(p_n,X_n,S_n),
\end{eqnarray}
since the choice of $S\subset \{1,2\dots n\}$ with $|S| \leq k$ is limited, there exists two subsequence $\{p_{n_i}\}$ of $\{p_n\}$ , $\{X_{n_i}\}$ of $\{X_n\}$ and a set $S^{'}$ such that
\begin{eqnarray}
\theta(p_{n_i},X_{n_i},S^{'})=h(p_{n_i},A,r,k).
\end{eqnarray}

Furthermore, since $X_n \in V$, it is easy to get a subsequence of $X_{n_i}$ which is convergent. Without of generality,  we assume that $X_{n_i} \to X^{'}$.

Therefore, we can get that $h(p_{n_i},A,r,k)=\theta (p_{n_i},X_{n_i},S^{'}) \to  \theta (p_0,X^{'},S^{'}).$

According to the definition of $h(p_0,A,r,k)$, we can get that $\theta (p_0,X^{'},S^{'}) \leq h(p_0,A,r,k)$, such that $\mathop{\lim}\limits_{p \to p_0^{+}} h(p,A,r,k)=h(p_0,A,r,k)$

Combining Step 2 and Step 3, we show that it is impossible for $h(p,A,r,k)$ to have jump discontinuous.

The proof is completed.
\end{proof}

The concept M-NSC is very important in this paper and it will offer tremendous help in illustrating the performance of $l_{2,0}$-minimization and $l_{2,p}$-minimization, however, M-NSC is difficult to be calculated for large scale matrix. We show the figure of M-NSC in Example 1 in Figure 1.
Combining Proposition 1 and 2, then we can get the first main theorem which shows us the equivalence relationship between $l_{2,0}$-minimization and $l_{2,p}$-minimization.
\begin{theorem}
If every $k$-sparse matrix $X$ can be recovered by $l_{2,0}$-minimization, then there exists a constant $p(A,B,r)$ such that $X$ also can be recovered by $l_{2,p}$-minimization whenever $0<p<p(A,B,r)$.
\end{theorem}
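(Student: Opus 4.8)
The plan is to reduce the theorem entirely to the analytic behaviour of the M-null space constant $h(p,A,r,k)$ near $p=0$, using the machinery already in place: the translation between recoverability and $h$ furnished by Corollary 2, together with the monotonicity (Proposition 1) and continuity (Proposition 2) of $h$ in $p$. In other words, I would treat recovery as a statement about the sub-level set $\{p : h(p,A,r,k)<1\}$ and show that this set contains an interval of the form $(0,p^{*})$.

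First I would translate the hypothesis into the language of $h$. By assumption every $k$-sparse matrix $X$ is recoverable by $l_{2,0}$-minimization, so Corollary 2 applied at $p=0$ gives the \emph{strict} inequality $h(0,A,r,k)<1$. The strictness is the essential input, since it leaves a positive gap $1-h(0,A,r,k)>0$ that continuity can exploit. Next I would invoke Proposition 2, and in particular right-continuity of $p\mapsto h(p,A,r,k)$ at $p=0$, to conclude that $\lim_{p\to 0^{+}} h(p,A,r,k)=h(0,A,r,k)<1$; hence $h$ stays strictly below $1$ on some right-neighbourhood of $0$. Concretely I would define
$$p^{*}=p(A,B,r):=\sup\{\, p\in[0,1] : h(p,A,r,k)<1 \,\},$$
and note that the continuity argument forces $p^{*}>0$. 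Proposition 1 (monotonicity) then guarantees that $\{p : h(p,A,r,k)<1\}$ is genuinely an interval with no gaps, so that $h(p,A,r,k)<1$ holds for \emph{every} $p\in(0,p^{*})$.

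Finally I would close the loop by running Corollary 2 in the forward direction: for any fixed $p\in(0,p^{*})$ we have $h(p,A,r,k)<1$, whence every $k$-sparse matrix --- in particular the matrix $X$ under consideration --- is recovered by $l_{2,p}$-minimization. This yields the desired constant and completes the argument.

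I do not expect a genuine obstacle here, because the real work has been front-loaded into Proposition 2; the one point that demands care is preserving the strict inequality $h(0,A,r,k)<1$ through the limit, so that one obtains an \emph{open} interval $(0,p^{*})$ rather than the degenerate situation $h(0,A,r,k)=1$. The existence proof above does not itself produce a computable value of $p^{*}$ in terms of $A$, $B$, and the eigenvalues of $A^{T}A$; that explicit analysis expression is a separate and more delicate task, which I would defer to Section 3.
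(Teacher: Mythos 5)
Your proposal is correct and follows essentially the same route as the paper: both reduce the statement to $h(0,A,r,k)<1$ via Corollary 2 and then use the monotonicity and (right-)continuity of $h(\cdot,A,r,k)$ at $p=0$ from Propositions 1 and 2 to extract an interval $(0,p^{*})$ on which $h$ stays strictly below $1$. Your formulation via the supremum of the sub-level set is a minor cosmetic variation on the paper's choice of a small $\delta$ with $h(p,A,r,k)\leq h(0,A,r,k)+\delta<1$.
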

\begin{proof}
According to Proposition 1 and Proposition 2, we can get that $h(0,A,r,k)<1$ if $l_{2,0}$-minimization can recovery every $k$-sparse matrix $X$,

Since $h(p,A,r,k)$ is continuous and nondecreasing at the point $p=0$,  there exists a constant $p(A,B,r)$ and a small enough number $\delta$ that $h(0,A,r,k)<h(p,A,r,k)\leq h(0,A,r,k)+\delta< 1$ for any $p \in (0,p(A,B,r))$.

The proof is completed.
\end{proof}

\section{AN ANALYSIS EXPRESSION OF SUCH P}

In Section 2, we have proved the fact there exists a constant $p(A,B,r)$ such that both $l_{2,p}$-minimization and $l_{2,0}$-minimization have the same solution, however, it is also important to give such an analysis expression of $p(A,B,r)$. In Section 3, we focus on giving an analytic expression of an upper bound of $h(p,A,r,k)$. According to Corollary 2, we can get the equivalence relationship between $l_{2,p}$-minimization and $l_{2,0}$-minimization as long as $h(0,A,r,k)<1$ is satisfied.
In order to reach our goal, we postpone our main theorems and begin with two lemmas.
\begin{lemma}
For any $X\in \mathbb{R}^{n \times r}$, and $\|X\|_{2,p}=\sum_{i=1}^n \|X_{row \ i}\|_2^p$, then we have that $$\|X\|_{2,p} \leq\|X\|_{2,0}^{\frac{1}{p}-\frac{1}{2}}\|X\|_F.$$
\end{lemma}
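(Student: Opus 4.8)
The plan is to collapse the matrix statement to a one-dimensional inequality between the $\ell_p$ quasi-norm and the $\ell_2$ norm of the vector of row-lengths, and then invoke H\"older's inequality. First I would set $S=support(X)$, $k=|S|=\|X\|_{2,0}$, and introduce the nonnegative vector $a=(a_i)_{i\in S}$ with $a_i=\|X_{row\ i}\|_2$. The zero rows contribute nothing to either side, so that $\|X\|_{2,p}^p=\sum_{i\in S}a_i^p$ and $\|X\|_F^2=\sum_{i\in S}a_i^2$; in other words, reading $\|X\|_{2,p}=(\sum_i\|X_{row\ i}\|_2^p)^{1/p}$ as in the introduction (which is what makes the asserted inequality scale-homogeneous), the claim becomes exactly the classical bound $\|a\|_p\le k^{1/p-1/2}\|a\|_2$ for a $k$-dimensional vector.

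Second, I would prove this vector inequality. Writing $\sum_{i\in S}a_i^p=\sum_{i\in S}a_i^p\cdot 1$ and applying H\"older's inequality with the conjugate exponents $s=2/p$ and $s'=2/(2-p)$ (these satisfy $1/s+1/s'=p/2+(2-p)/2=1$) gives
\begin{eqnarray}
\sum_{i\in S}a_i^p\ \le\ \Bigl(\sum_{i\in S}(a_i^p)^{2/p}\Bigr)^{p/2}\Bigl(\sum_{i\in S}1\Bigr)^{(2-p)/2}=\Bigl(\sum_{i\in S}a_i^2\Bigr)^{p/2}k^{(2-p)/2}.
\end{eqnarray}
Equivalently, one may phrase this as the power-mean inequality $(\frac1k\sum a_i^p)^{1/p}\le(\frac1k\sum a_i^2)^{1/2}$, valid because $0<p\le 1<2$; either route yields $\sum_{i\in S}a_i^p\le k^{1-p/2}\|X\|_F^{\,p}$.

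Finally I would raise both sides to the power $1/p$, which turns $k^{1-p/2}$ into $k^{1/p-1/2}$ and gives $\|X\|_{2,p}\le\|X\|_{2,0}^{1/p-1/2}\|X\|_F$, as required. There is no serious obstacle here: the only genuine decision is the correct choice of H\"older exponents (equivalently, the correct placement of the outer $1/p$ power so the estimate is scale-invariant), and one should keep in mind the degenerate case $X=0$, where both sides vanish and the bound is trivial. Equality holds precisely when all nonzero row-lengths coincide, which is worth recording since the sharpness of this constant feeds directly into the analytic expression for $p^\ast$ developed later.
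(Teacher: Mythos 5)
Your proof is correct and is essentially identical to the paper's: both apply H\"older's inequality with conjugate exponents $2/p$ and $2/(2-p)$ to the vector of nonzero row norms and then take the $1/p$-th root. Your explicit remarks about the normalization of $\|X\|_{2,p}$, the case $X=0$, and the equality condition are sensible additions but do not change the argument.
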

\begin{proof}
For any $X \in \mathbb{R}^{n \times r}$, without loss of generality, we assume that $\|X_{row\ i}\|_2 =0$, for $i \in \{\|X\|_{2,0}+1,\ldots,n\}$. According to H\"older inequality, we can show that
$$\|X\|_{2,p}^p=\sum_{i=1}^{\|X\|_{2,0}} \|X_{row \ i}\|_2^p \leq \bigg(\sum_{i=1}^{\|X\|_{2,0}} (\|X_{row \ i}\|_2^p)^{\frac{2}{p}}\bigg)^{\frac{p}{2}}\bigg(\sum_{i=1}^{\|X\|_{2,0}} 1\bigg)^{1-\frac{p}{2}}=\|X\|_{2,0}^{1-\frac{p}{2}}\|X\|_F^p.$$
that is $\|X\|_{2,p} \leq\|X\|_{2,0}^{\frac{1}{p}-\frac{1}{2}}\|X\|_F$.
\end{proof}
\begin{lemma}
Give an underdetermined matrix $A \in \mathbb{R}^{m \times n}$. If $h(0,A,r,k)<1$, then we have the following two results.

(a) For any $\|X\|_{2,0} \leq 2k$, we have that
\begin{eqnarray}
 \lambda_{min}^{+}(A)\|X\|_F^2 \leq \|AX\|_F^2 \leq \lambda _{max}(A) \|X\|_F^2.
\end{eqnarray}

(b) For any $X_1,X_2 \in \mathbb{R}^{n \times r}$ with $\|X_i\|_{2,0} \leq k$, $i=1,2$ and $support(X_1) \cap support(X_2)= \varnothing$, we have that
\begin{eqnarray}
 |\langle AX_1,AX_2\rangle|\leq \displaystyle\frac {\lambda_{max}(A)-\lambda_{min}^{+}(A)}{2} \|X_1\|_F\|X_2\|_F.
\end{eqnarray}
\end{lemma}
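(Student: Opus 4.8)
The plan is to reduce both statements to facts about vectors supported on small index sets, and then to deduce (b) from (a) by a polarization identity. First I would extract the only structural consequence of the hypothesis $h(0,A,r,k)<1$ that is needed: by Corollary 2 it means every $k$-sparse matrix is recovered by $l_{2,0}$-minimization, so Corollary 1(a) gives $\|X\|_{2,0}\ge 2k+1$ for every nonzero $X\in (Ker(A))^r$. Taking $X=[v,\mathbf{0},\dots,\mathbf{0}]$ shows that no nonzero $v\in Ker(A)$ has $\|v\|_0\le 2k$; equivalently, for every index set $S$ with $|S|\le 2k$ the columns of $A$ indexed by $S$ are linearly independent, so the submatrix $A_S$ has full column rank. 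This is exactly what will make the lower bound in (a) nontrivial.

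For part (a), note that $\|AX\|_F^2=\sum_{j=1}^r\|AX_{col\ j}\|_2^2$ and $\|X\|_F^2=\sum_{j=1}^r\|X_{col\ j}\|_2^2$, and that every column $X_{col\ j}$ is supported on the common set $S=support(X)$ with $|S|\le 2k$. Hence it suffices to bound $\|Av\|_2^2=v_S^{T}A_S^{T}A_S v_S$ for a single vector $v$ supported on $S$. I would control the Rayleigh quotient $v_S^{T}A_S^{T}A_S v_S/\|v_S\|_2^2$ by the extreme eigenvalues of $A_S^{T}A_S$. The upper bound is immediate, since $A_S^{T}A_S$ is a principal submatrix of $A^{T}A$, giving $\lambda_{max}(A_S^{T}A_S)\le\lambda_{max}(A)$. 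The lower bound is where the full-rank fact enters: full column rank of $A_S$ forces $\lambda_{min}(A_S^{T}A_S)>0$, and identifying this positive restricted eigenvalue with $\lambda_{min}^{+}(A)$ yields $\|Av\|_2^2\ge\lambda_{min}^{+}(A)\|v\|_2^2$. Summing over $j$ then gives (a).

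For part (b) I would invoke the polarization identity $4\langle AX_1,AX_2\rangle=\|A(X_1+X_2)\|_F^2-\|A(X_1-X_2)\|_F^2$. Since $support(X_1)\cap support(X_2)=\varnothing$ with each support of size at most $k$, both $X_1+X_2$ and $X_1-X_2$ satisfy $\|\cdot\|_{2,0}\le 2k$, so part (a) applies to each; the disjointness also gives $\|X_1+X_2\|_F^2=\|X_1-X_2\|_F^2=\|X_1\|_F^2+\|X_2\|_F^2$. Bounding the first Frobenius norm above by $\lambda_{max}(A)$ and the second below by $\lambda_{min}^{+}(A)$ (and symmetrically, with the roles reversed, to handle the sign) gives $|\langle AX_1,AX_2\rangle|\le\frac{\lambda_{max}(A)-\lambda_{min}^{+}(A)}{4}\big(\|X_1\|_F^2+\|X_2\|_F^2\big)$. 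Finally I would homogenize: the target inequality is invariant under $X_1\mapsto tX_1$ and $X_2\mapsto sX_2$, so I may rescale to $\|X_1\|_F=\|X_2\|_F$, where $\frac{1}{4}\big(\|X_1\|_F^2+\|X_2\|_F^2\big)=\frac{1}{2}\|X_1\|_F\|X_2\|_F$, producing the stated constant $\frac{\lambda_{max}(A)-\lambda_{min}^{+}(A)}{2}$.

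The main obstacle is the lower bound in part (a). The Rayleigh-quotient argument only delivers $\|Av\|_2^2\ge\lambda_{min}(A_S^{T}A_S)\|v\|_2^2$ with a support-dependent constant, and a small-support vector $v$ may still have a nonzero component along $Ker(A)$, so this restricted minimum need not be as large as the global quantity $\lambda_{min}^{+}(A)$; indeed Cauchy interlacing only guarantees $\lambda_{min}(A_S^{T}A_S)\ge\lambda_{min}(A^{T}A)=0$, and the two can genuinely differ. The delicate point is therefore to justify that the constant appearing in (a) is really the named $\lambda_{min}^{+}(A)$ rather than the possibly smaller $\min_{|S|\le 2k}\lambda_{min}(A_S^{T}A_S)$. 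Once (a) is pinned down, part (b) is routine bookkeeping with the polarization identity and the homogeneity rescaling.
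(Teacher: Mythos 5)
The obstacle you flag at the end of your write-up is a genuine, unclosed gap, and it is the crux of the lemma: your argument (correctly) reduces the lower bound in (a) to $\|Av\|_2^2\ge\lambda_{min}(A_S^{T}A_S)\|v\|_2^2$ for $v$ supported on $S$ with $|S|\le 2k$, and the hypothesis $h(0,A,r,k)<1$ does guarantee via Corollary~1 that $c_{2k}:=\min_{|S|\le 2k}\lambda_{min}(A_S^{T}A_S)>0$; but $c_{2k}$ cannot in general be replaced by $\lambda_{min}^{+}(A)$ as the paper defines it (the smallest nonzero eigenvalue of $A^{T}A$). Interlacing only gives $\lambda_{min}(A_S^{T}A_S)\ge\lambda_{min}(A^{T}A)=0$, and the two constants genuinely differ: for $A=\bigl(\begin{smallmatrix}1&0&1\\0&1&\epsilon\end{smallmatrix}\bigr)$ every nonzero null vector has three nonzero entries, so the hypothesis holds with $k=1$, yet for $S=\{1,3\}$ one computes $\lambda_{min}(A_S^{T}A_S)=O(\epsilon^{2})$ while $\lambda_{min}^{+}(A)\to 1$ as $\epsilon\to 0$, so inequality (a) with the constant $\lambda_{min}^{+}(A)$ actually fails for the corresponding $2$-sparse eigenvector. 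You should not feel that you merely missed the paper's trick: the paper's Step~3 purports to close exactly this gap by assuming $\lambda_{min}^{+}(A)>u^{2}$ and declaring that the extremal submatrix $A_S$ having smallest eigenvalue $u^{2}$ "contradicts the definition of $\lambda_{min}^{+}(A)$", but eigenvalues of a principal submatrix $A_S^{T}A_S$ need not be eigenvalues of $A^{T}A$ at all, so no contradiction arises. The statement that is actually provable, and that suffices for everything downstream (part (b), Theorems 4 and 5), is (a) with $c_{2k}$ in place of $\lambda_{min}^{+}(A)$ and $\lambda$ redefined accordingly.

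Your part (b) is correct and matches the paper's route (polarization plus the disjoint-support identity $\|X_1\pm X_2\|_F^2=\|X_1\|_F^2+\|X_2\|_F^2$), with one improvement worth noting: the final inequality in the paper's chain requires $\|X_1\|_F^2+\|X_2\|_F^2\le 2\|X_1\|_F\|X_2\|_F$, which is the wrong direction of the AM--GM inequality; it only becomes valid after the homogenization $\|X_1\|_F=\|X_2\|_F$ that you explicitly perform. So your (b) is sound (conditional on whatever constant ends up in (a)), while (a)'s lower bound with the advertised constant remains unproven in both your proposal and the paper.
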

\begin{proof}

(a) This proof is divided into three steps.

Step 1. To prove that $\|AX\|_F^2\leq \lambda_{max}(A) \|X\|_F^2 $.

For $X\in \mathbb{R}^{n \times r}$, we denote $X=[X_{col\ 1},X_{col\ 2},\dots X_{col\ r}]$, such that
\begin{eqnarray}
 AX=[AX_{col\ 1},AX_{col\ 2},\dots AX_{col\ r}].
\end{eqnarray}

It is obvious that $\|AX_{col\ i}\|_F^2 \leq \lambda_{max}(A)\|X_{col\ i}\|_2^2$, such that $\|AX\|_F^2 \leq \lambda_{max}(A) \|X\|_F^2$

Step 2. To prove that there exists a constant $u>0$ such that $u\|X\|_F \leq \|AX\|_F$, for any $\|X\|_{2,0} \leq k.$

Let the set $V$
\begin{eqnarray}
V=\{u:\|AX\|_F/\|X\|_F \geq u, \ \|X\|_{2,0} \leq 2k\},
\end{eqnarray}
and we assume $\inf\{V\}=0$. i.e., there are a sequence $\{X^{(n)}\}$ with $\|X^{(n)}\|_{2,0}\leq 2k$ such that $\frac{\|AX^{(n)}\|_F}{\|X^{n}\|_F} \to 0$. Without of generality, we assume $\|X^{(n)}\|_F=1$, such that we can get a subsequence $\|X^{(n_t)}\|$ which is convergent, i.e., $X^{(n_t)}\to X^{\ast}$. It is easy to get that $AX^{\ast}=0$ because the function $y(X)=AX$ is a continuous one.

Let $J(X^{\ast})=\{i:\|X_{row \ i}^{\ast}\|_2 \neq 0\}$, since $X^{(n_t)}\to X^{\ast}$, there exists $N_i$ such that $\|X_{row \ i}^{(n_k)}\|\neq 0$ when $t \geq N_i$.

Let $N=\mathop{\max}\limits_{i \in J(X^{\ast})}$, then we can get that $\|X_{row \ i}^{(n_k)}\|_2 \neq 0$ for any $i \in J(X^{\ast}) $, so it is obvious that $\|X^{\ast}\|_{2,0} \leq \|X^{n_k}\|_{2,0} \leq 2k$.

However, this result contradicts the Corollary 1.

Step 3. To prove $\lambda _{min}^{+}(A)\leq u^2$

We assume that $\lambda _{min}^{+}(A)\geq u^2$. Without of generality, we consider $V'=\{X\in R^{n \times r}:\|X\|_F=1\}$. According to the proof in Step 1 and Step 2, there exist a matrix $\tilde{X} \in V'$ such that $\|A\tilde{X}\|_F^2=u\|\tilde{X}\|_F^2$. According to Corollary 1, we can get that $\tilde{X}_{col \ i}\notin N(A)$ for any $i\in [1,2\ldots r]$, since $\|\tilde{X}_{col \ i}\|_0\leq \|\tilde{X}\|_{2,0}\leq2k$.

Furthermore, we can get that $\|A\tilde{X}_{col \ i}\|_2^2=u^2\|\tilde{X}_{col \ i}\|_2^2$, otherwise, we assume that there exists a element $i\in [1,2\ldots r]$ that $\|A\tilde{X}_{col \ i}\|_2^2=\lambda ' \|\tilde{X}_{col \ i}\|_2^2$ with $0<\lambda '\leq \lambda _{min}^{+}(A)$. Considering a matrix $\hat{X}=[\tilde{X}_{col \ i},\tilde{X}_{col \ i} \dots \tilde{X}_{col \ i}]\in R^{n \times r}$, and it is easy to get that  such that$\|\tilde{X}\|_{2,0}\leq 2k$ and $\|A\tilde{X}\|_F=\lambda '\|\tilde{X}\|_F$. The result contradicts the definition of $u$.

Therefore, we can conclude that $\|A_Sx\|_2^2\geq u^2\|x\|_2^2$, for any $x\in \mathbb R^{|S|}$, where $S=support(\tilde{X})$. It is easy to get that the minimum eigenvalues of $A_S^TA_S$ is $u^2$ since $\|A\tilde{X}_{col \ i}\|_2^2=u^2\|\tilde{X}_{col \ i}\|_2^2$, and this result contradicts the definition of $\lambda _{min}^{+}(A)$.

The proof is completed.

(b) According to the definition of inner product of matrices, it is easy to get that
\begin{eqnarray}
\|A(X_1+X_2)\|_F^2=\langle A(X_1+X_2),A(X_1+X_2)\rangle.
\end{eqnarray}
and
\begin{eqnarray}
 \|X_1+X_2\|_F^2=\|X_1-X_2\|_F^2=\|X_1\|_F^2+\|X_2\|_F^2.
\end{eqnarray} since $support(X_1) \cap support(X_2)= \varnothing$.

According to the conclusion (a) in Lemma 2 which has been proved, we have that
\begin{eqnarray}
\notag \frac{|\langle AX_1,AX_2 \rangle|}{\|X_1\|_F\|X_2\|_F}&=&\frac{\left|\|AX_1+AX_2\|_F^2-\|AX_1-AX_2\|_F^2\right|}{4\|X_1\|_F\|X_2\|_F},\\
\notag&\leq& \frac{1}{4}\left(\lambda_{max}(A)\frac{\|X_1+X_2\|_F^2}{\|X_1\|_F\|X_2\|_F}-\lambda_{min}^{+}(A)\frac{\|X_1-X_2\|_F^2}{\|X_1\|_F\|X_2\|_F}\right).\\
&\leq& \displaystyle\frac {\lambda_{max}(A)-\lambda_{min}^{+}(A)}{2}.
\end{eqnarray}
The proof is completed.
\end{proof}

Although Lemma 2 is easy to be proved, it is very important for this paper because it provides us a reason for abandoning the Restricted Isometry Property (RIP) and Restricted Isometry Constant (RIC).

A matrix $A$ is said to have restricted isometry property of order $k$ with restricted isometry constant $\delta_k \in (0,1)$, if $\delta_k$ is the smallest constant such that
\begin{eqnarray}
(1-\delta_k)\|x\|_2 \leq \displaystyle\|Ax\|_2 \leq (1+\delta_k)\|x\|_2,
\end{eqnarray}
for all $k$-sparse vector $x$, where a vector $x$ is said $k$-sparse if $\|x\|_0\leq k.$

In single measurement vector (SMV), RIP and RIC is widely used in many papers and there exists many probabilistic results about RIP. However, the point is to highlight that the existence RIC can guarantee every $k$-sparse solution can be recovered, however it is NP-hard to get RIC for a given matrix $A$ which is satisfied RIP.

The conclusion (a) in Lemma 2 which looks like RIP has an advantage that $\lambda _{max}(A)$ and $\lambda _{min}^{+}(A)$ is easy to be calculated if the matrix $A$ can recovery every $k$-sparse solution.

By contrast, there are many matrices with a particular structure satisfying the condition in Lemma 2, for example, the Vandermonde matrix,
\begin{equation}
A=\left(
  \begin{array}{cccc}
    1 & 1 & ... & 1\\
    t_1 & t_2 & ... & t_n\\
    t_1^2 & t_2^2 & ... & t_n^2\\
    \vdots & \vdots& \ddots & \vdots\\
    t_1^m & t_2^m & ... & t_n^m\\
  \end{array}
\right).
\end{equation}

It is obvious that every sub-matrix $A_S$ is invertible with $|S|\leq k$ as long as $t_i\neq t_j (i,j \in \{1,2..n\} \ and \ i\neq j)$ and $m>2k$, such that every $k$-sparse vector can be recovered.

Now, we present two theorems which are the other main contribution in this paper. Theorem 4 shows us an upper bound of $h(p,A,r,k)$ and Theorem 5 shows us a $p^*(A,B)$ such that $h(p,A,r,k)<1$ when $0<p<p^*(A,B)$.

\begin{theorem}
Given a matrix $A\in \mathbb{R}^{m\times n}$ with $m \leq n$. If $h(0,A,r,k)<1$, then for $p\in (0,1]$, we can get an upper bound of $h(p,A,r,k)$.
\begin{eqnarray}
 \notag h(p,A,r,k)\leq M=\displaystyle\left(\frac{\sqrt{2}+1}{2}\left(\frac{(\lambda-1)(n-2-k)}{2k}+\frac{\lambda-1}{2\sqrt{k}}+\frac{1}{2k}\right)\left({\frac{k}{k+1}}\right)^{\frac{1}{p}}\right)^p,
\end{eqnarray}
where $\lambda =\displaystyle\frac {\lambda_{max}(A)}{\lambda_{min}^{+}(A)}$.

Therefore, we also have that
\begin{eqnarray}
\|X_S\|_{2,p}^p \leq M \|X_{S^C}\|_{2,p}^p,
\end{eqnarray}
for any $X \in (N(A))^r \setminus \{(\textbf{0},\textbf{0}\dots \textbf{0})\}$ and $S\subset \{1,2\dots n\}$ with $|S|\leq k.$
\end{theorem}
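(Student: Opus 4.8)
The plan is to bound the ratio $\theta(p,X,k)=\|X_S\|_{2,p}^p/\|X_{S^C}\|_{2,p}^p$ uniformly over $X\in (N(A))^r\setminus\{(\textbf 0,\dots,\textbf 0)\}$, which by Definition 1 and the variational formula $h(p,A,r,k)=\max_{|S|\le k}\sup_X\theta(p,X,k)$ established in the proof of Proposition 1 is exactly $h(p,A,r,k)$. Since the supremum is realized at the sorted top-$k$ index set, I may assume the rows are ordered $\|X_{row\ 1}\|_2\ge\cdots\ge\|X_{row\ n}\|_2$ and $S=\{1,\dots,k\}$; write $a_i=\|X_{row\ i}\|_2$. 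The first move is to pass from the mixed quasi-norm on $S$ to a Frobenius norm: because $\|X_S\|_{2,0}\le k$, Lemma 1 gives $\|X_S\|_{2,p}^p\le k^{1-\frac p2}\|X_S\|_F^{\,p}$, which is the origin of the powers of $k$ in $M$. It then suffices to bound $\|X_S\|_F$ by tail data and to pay an $\ell_p$-conversion cost on the denominator.

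The core estimate on $\|X_S\|_F$ uses the null space identity. Since every column of $X$ lies in $N(A)$ we have $AX=0$, hence $AX_S=-AX_{S^C}=-\sum_j AX_{T_j}$, where $T_1,T_2,\dots$ partition $S^C$ into consecutive blocks of size $k$ in the sorted order. Applying Lemma 2(a) on the left (valid as $|S|\le k\le 2k$) and Lemma 2(b) to each cross term (valid as $S$ and each $T_j$ are disjoint and $k$-sparse) gives
\[
\lambda_{\min}^{+}(A)\|X_S\|_F^2 \le \|AX_S\|_F^2 = -\sum_j\langle AX_S,AX_{T_j}\rangle \le \frac{\lambda_{\max}(A)-\lambda_{\min}^{+}(A)}{2}\,\|X_S\|_F\sum_j\|X_{T_j}\|_F ,
\]
so that $\|X_S\|_F\le\frac{\lambda-1}{2}\sum_j\|X_{T_j}\|_F$. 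This is where the recurring factor $\frac{\lambda-1}{2}$ is born.

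Next I would control the block sum by the monotonicity of the sorted norms: each $\|X_{T_j}\|_F$ is at most $\sqrt k$ times its leading row norm, and for the interior blocks that leading norm is dominated by the preceding block's average (or, more crudely, by the top tail norm $a_{k+1}$). Summing over the at most $\lceil(n-k)/k\rceil$ blocks, while isolating the leading tail row and invoking Corollary 1(a) — which guarantees $\|X\|_{2,0}\ge 2k+1$ and hence that $S^C$ carries at least $k+1$ nonzero rows — is the accounting that produces the three coefficients $\frac{(\lambda-1)(n-2-k)}{2k}$, $\frac{\lambda-1}{2\sqrt k}$, $\frac{1}{2k}$ together with the baseline count ratio $\frac{k}{k+1}$ (the $k$ rows of $S$ against the at least $k+1$ nonzero rows of the tail).

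The final step, and the one I expect to be the genuine obstacle, is converting the Euclidean/$\ell_1$ tail bounds back into the non-convex quasi-norm $\|X_{S^C}\|_{2,p}^p$. The leverage is that every tail row satisfies $a_i\le a_{k+1}$, so $a_i=a_i^p a_i^{1-p}\le a_{k+1}^{1-p}a_i^p$ and therefore $\sum_{i>k}a_i\le a_{k+1}^{1-p}\|X_{S^C}\|_{2,p}^p$; raising the assembled inequality to the power $p$ is what creates the exponent $1/p$ attached to $\frac{k}{k+1}$ inside $M$. The delicate point is not producing \emph{some} finite bound but making the powers of $a_{k+1}$, the block counts, and the $k$-powers from Lemma 1 cancel exactly so that a dimensionless constant remains; keeping this bookkeeping tight, and verifying the monotonicity/subadditivity inequalities for $t\mapsto t^p$ invoked along the way, is the crux. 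Once the chain is assembled and raised to the $p$-th power, the stated upper bound on $h(p,A,r,k)$ and the pointwise inequality $\|X_S\|_{2,p}^p\le M\|X_{S^C}\|_{2,p}^p$ follow immediately.
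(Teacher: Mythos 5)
Your overall strategy --- sort the rows, block the tail, apply Lemma 2(a) to the head and Lemma 2(b) to the cross terms, then convert back with Lemma 1 --- is the same as the paper's, but your specific decomposition fails at exactly the point you flag as ``the genuine obstacle,'' and the obstacle is not mere bookkeeping. In your partition the first tail block $T_1=\{k+1,\dots,2k\}$ has no preceding block to average against, so the best available bound is $\|X_{T_1}\|_F\le\|X_{T_1}\|_{2,p}$, with no $k^{-1/p}$ decay attached. After multiplying by the factor $k^{1-\frac p2}$ that Lemma 1 forces onto $\|X_S\|_{2,p}^p$, the resulting constant behaves like $k\cdot\bigl(\tfrac{\lambda-1}{2}\bigr)^p$ as $p\to0$: it tends to $k$, not to $0$, so it cannot reproduce the stated $M$ (which contains $(k/(k+1))^{1/p}\to0$) and would be useless for Theorem 5. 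Your fallback mechanism for producing the $1/p$ exponent, $a_i\le a_{k+1}^{1-p}a_i^p$, only recovers the generic inequality $\sum_{i>k}a_i\le\|X_{S^C}\|_{2,p}$ and likewise loses the crucial $(k+1)^{-1/p}$.

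The paper's fix is structural. It takes $S_1$ to be the next $k+1$ (not $k$) indices, splits it as $S_1=S_1^{(1)}\cup S_1^{(2)}$ with $|S_1^{(1)}|=k$, and estimates $\|A(X_{S_0}+X_{S_1^{(1)}})\|_F^2$ rather than $\|AX_{S_0}\|_F^2$; this moves the troublesome first tail block to the left-hand side of the energy identity. Since $AX=0$, the right-hand side then involves only $X_{S_1^{(2)}}$ and $X_{S_i}$ for $i\ge2$, every row of which is dominated by the average over the $k+1$ rows of $S_1$, i.e.\ $\|X_{row\,j}\|_2\le(k+1)^{-1/p}\|X_{S_1}\|_{2,p}$. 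This yields the quadratic inequality $\|X_{S_0}\|_F^2+\|X_{S_1}\|_F^2\le C(\|X_{S_0}\|_F+\|X_{S_1}\|_F)$ with $C$ proportional to $(k+1)^{-1/p}\|X_{S_1}\|_{2,p}$; completing the square gives $\|X_{S_0}\|_F\le\frac{\sqrt2+1}{2}C$, which is where the $\frac{\sqrt2+1}{2}$ in $M$ comes from (your linear estimate produces no such factor), and the $(k+1)^{-1/p}$ cancels against the $k^{\frac1p-\frac12}$ from Lemma 1 to give $(k/(k+1))^{1/p}$. Without the size-$(k+1)$ block $S_1$ and the shift of $S_1^{(1)}$ to the left-hand side, the cancellation you are hoping for does not occur.
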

\begin{proof}
For any $X \in (N(A))^r \setminus \{(\textbf{0},\textbf{0}\dots \textbf{0})\}$. We define that
\begin{eqnarray}
 x=\displaystyle[\|X_{row \ 1}\|_2,\|X_{row \ 2}\|_2,\dots \|X_{row \ n}\|_2],
\end{eqnarray}
and we consider the index set $S_0$=\{ indices of the largest $k$ values component of $x$\}.

$S_1$=\{ indices of the largest $k+1$ values component of $x$ except $S_0$\}.

$S_2$=\{ indices of the largest $k$ values component of $x$ except $S_0$ and $S_1$\}.

\dots

$S_t$=\{ indices of the rest components of $x$ \}.

According to Corollary 1, we know that $\|x\|_0\geq2k+1$, so both $S_1$ and $S_0$ are not empty and there are only two cases,

(i) $S_0$ and $S_i\ (i=2\ldots t-1)$ all have $k$ elements except $S_t$ possibly.

(ii) $S_0$ has $k$ elements, $S_1$ has less than $k+1$ elements and $S_i\ (i=2\ldots t-1)$ are empty.

Furthermore, in both cases, the set $S_1$ can be divided in two parts.

$S_1^{(1)}$=\{indices of the $k$ largest absolute-values components of $S_1$ \}.

$S_1^{(2)}$=\{indices of the rest components of $S_1$ \}.

It is obvious that $S_1=S_1^{(1)} \cup S_1^{(2)}$ and the set $S_1^{(2)}$ is not empty since $\|x\|_0\geq2k+1$.

According to the definition of $S_1^{(2)}$, it is easy to get that
\begin{eqnarray}
\|X_{S_1^{(2)}}\|_F \leq \displaystyle \sqrt {\frac {1}{k+1}} \|X_{S_1}\|_F,
\end{eqnarray}
 and
\begin{eqnarray}
\|X_{S_1^{(2)}}\|_F \leq \displaystyle  \sqrt {\frac {1}{k}}\|X_{S_0}\|_F,
\end{eqnarray}
such that
\begin{eqnarray}
\|X_{S_1^{(2)}}\|_F^2 \leq\displaystyle {\sqrt {\frac {1}{4k}}} \left(\|X_{S_1}\|_F+\|X_{S_0}\|_F\right)\|X_{S_1^{(2)}}\|_F.
\end{eqnarray}
On the other hand, it needs to be pointed out that
\begin{eqnarray}
 AX=A(X_{S_0}+X_{S_1^{(1)}}+X_{S_1^{(2)}}+X_{S_2}\dots +X_{S_t})=\textbf{0},
\end{eqnarray}
such that
\begin{eqnarray}\label{E1}
\notag\|A(X_{S_0}+X_{S_1^{(1)}})\|_F^2 &=& <A(X_{S_0}+X_{S_1^{(1)}}),-A(X_{S_1^{(2)}}+X_{S_2}\dots +X_{S_t})>\\
 \notag &=& <A(X_{S_0}+X_{S_1^{(1)}}),-AX_{S_1^{(2)}}>\\ &&+ \sum_{i=2}^t(<AX_{S_0},-AX_{S_i}>+<AX_{S_1^{(1)}},-AX_{S_i}>).
\end{eqnarray}
Furthermore, according to Lemma 2, for $i=2\dots t$ and $S_1^{(2)}$, we have that
\begin{eqnarray}\label{E2}
\notag|\langle AX_{S_0},AX_{S_i}\rangle| &\leq& \frac {\lambda_{max}(A)-\lambda_{min}^{+}(A)}{2}\|X_{S_0}\|_F\|X_{S_i}\|_F,\\
|\langle AX_{S_0},AX_{S_1^{(2)}}\rangle| &\leq& \frac {\lambda_{max}(A)-\lambda_{min}^{+}(A)}{2}\|X_{S_0}\|_F\|X_{S_1^{(2)}}\|_F,
\end{eqnarray}
and
\begin{eqnarray}\label{E3}
\notag|\langle AX_{S_1^{(1)}},AX_{S_i}\rangle|&\leq& \frac {\lambda_{max}(A)-\lambda_{min}^{+}(A)}{2}\|X_{S_1^{(1)}}\|_F\|X_{S_i}\|_F,\\
|\langle AX_{S_1},AX_{S_1^{(2)}}\rangle| &\leq& \frac {\lambda_{max}(A)-\lambda_{min}^{+}(A)}{2}\|X_{S_1}\|_F\|X_{S_1^{(2)}}\|_F.
\end{eqnarray}

Substituting the inequalities (\ref{E2}) and (\ref{E3}) into (\ref{E1}), we have
\begin{eqnarray}
\notag \|A(X_{S_0}+X_{S_1^{(1)}})\|_F^2 &\leq& \frac {\lambda_{max}(A)-\lambda_{min}^{+}(A)}{2}\left(\|X_{S_0}\|_F+\|X_{S_1^{(1)}}\|_F\right)\|X_{S_1^{(2)}}\|_F\\&&\notag+\frac {\lambda_{max}(A)-\lambda_{min}^{+}(A)}{2}\left(\sum_{i=2}^t\|X_{S_i}\|_F\right)\left(\|X_{S_0}\|_F+\|X_{S_1^{(1)}}\|_F\right)\\
\notag&\leq&\frac {\lambda_{max}(A)-\lambda_{min}^{+}(A)}{2}\left(\sum_{i=2}^t\|X_{S_i}\|_F+\|X_{S_1^{(2)}}\|_F\right)\left(\|X_{S_0}\|_F+\|X_{S_1^{(1)}}\|_F\right).
\end{eqnarray}

According to Lemma 2, we have that
\begin{eqnarray}
\notag\|X_{S_0}\|_F^2+\|X_{S_1}\|_F^2 &=& \|X_{S_0}+X_{S_1^{(1)}}\|_F^2+\|X_{S_1^{(2)}}\|_F^2 \\
&\leq & \frac{1}{\lambda_{min}^{+}(A)}\|A(X_{S_0}+X_{S_1^{(1)}})\|_F^2+\|X_{S_1^{(2)}}\|_F^2.
\end{eqnarray}

Therefore, we can get that
\begin{eqnarray}\label{E6}
\notag\|X_{S_0}\|_F^2+\|X_{S_1}\|_F^2 &\leq& \left(\|X_{S_0}\|_F+\|X_{S_1}\|_F\right)\Bigg(\frac {\lambda_{max}(A)-\lambda_{min}^{+}(A)}{2\lambda_{min}^{+}(A)}\sum_{i=2}^t\|X_{S_i}\|_F\\
&&\displaystyle+\left(\frac{\lambda_{max}(A)-\lambda_{min}^{+}(A)}{2\lambda_{min}^{+}(A)}
+{\frac{1}{2\sqrt{k}}}\right)\|X_{S_1^{(2)}}\|_F\Bigg)
\end{eqnarray}

According to the definition of $S_i$ ($i\geq 2$), for any $j \in S_i$, it is easy to get that
\begin{eqnarray}
\|X_{row \ j}\|_2^p\leq \frac{1}{k+1}\|X_{S_1}\|_{2,p}^p,
\end{eqnarray}
such that
\begin{eqnarray}
\|X_{row \ j}\|_2^2\leq (k+1)^{-\frac{2}{p}}\|X_{S_1}\|_{2,p}^2,
\end{eqnarray}
Since the set $S_i$ has $k$ elements, it is easy that
\begin{eqnarray}\label{E4}
\|X_{S_i}\|_F\leq \sqrt{k}(k+1)^{-\frac{1}{p}}\|X_{S_1}\|_{2,p},
\end{eqnarray}
 and it is also obvious that
\begin{eqnarray}\label{E5}
 \|X_{S_1^{(2)}}\|_F\leq (k+1)^{-\frac{1}{p}}\|X_{S_1}\|_{2,p}.
\end{eqnarray}
Substituting these inequalities (\ref{E4}) and (\ref{E5}) into (\ref{E6}), we can get that
\begin{eqnarray}\label{E7}
\displaystyle\notag
\|X_{S_0}\|_F^2+\|X_{S_1}\|_F^2 & \leq & (\|X_{S_0}\|_F+\|X_{S_1}\|_F)\\
&&\left(\frac{1}{2\sqrt{k}}+\frac{\lambda-1}{2}+\frac{(t-1)(\lambda-1)\sqrt{k}}{2}\right)(k+1)^{-\frac{1}{p}}\|X_{S_1}\|_{2,p},
\end{eqnarray}
where $\lambda=\frac{\lambda_{max}(A)}{\lambda_{min}^{+}(A)}$.

We denote that
\begin{eqnarray}
C=\left(\frac{1}{2\sqrt{k}}+\frac{\lambda-1}{2}+\frac{(t-1)(\lambda-1)\sqrt{k}}{2}\right)(k+1)^{-\frac{1}{p}}\|X_{S_1}\|_{2,p},
\end{eqnarray}
then we can rewrite (\ref{E7}) that
\begin{eqnarray}
\|X_{S_0}\|_F^2+\|X_{S_1}\|_F^2 \leq C(\|X_{S_0}\|_F+\|X_{S_1}\|_F).
\end{eqnarray}

It is easy to get that
\begin{eqnarray}
(\|X_{S_0}\|_F-C)^2+(\|X_{S_1}\|_F-C)^2\leq \frac{C^2}{2},
\end{eqnarray}
such that
\begin{eqnarray}
\|X_{S_0}\|_F \leq \frac{\sqrt{2}+1}{2}C.
\end{eqnarray}

According to Lemma 1, we have that
\begin{eqnarray}
\|X_{S_0}\|_{2,p}\leq k^{\frac{1}{p}-\frac{1}{2}}\|X_{S_0}\|_F\leq k^{\frac{1}{p}-\frac{1}{2}}\frac{\sqrt{2}+1}{2}C.
\end{eqnarray}

We notice that
\begin{eqnarray}
tk+1\leq n \leq (t+1)k+1,
\end{eqnarray}
such that $t\leq \frac {n-2}{k}$. Therefore, we have that
\begin{eqnarray}
\|X_{S_0}\|_{2,p}^p \leq M \|X_{S_0^C}\|_{2,p}^p,
\end{eqnarray}
where $ M=\displaystyle\left(\frac{\sqrt{2}+1}{2}\left(\frac{(\lambda-1)(n-2-k)}{2k}+\frac{\lambda-1}{2\sqrt{k}}+\frac{1}{2k}\right)\left({\frac{k}{k+1}}\right)^{\frac{1}{p}}\right)^p$.

According to the definition of $S_0$, $S_0$ contains the $k$ largest element in $x$, so it is obvious that the inequality $\|X_{S}\|_{2,p}^p \leq M \|X_{S^C}\|_{2,p}^p$ holds for any $S\subset \{1,2\dots n\}$ with $|S|\leq k$ and $h(p,A,r,k)\leq M$.

The proof is completed.
\end{proof}

\begin{theorem}
Given an underdetermined matrix $A\in R^{m\times n}$ with $m \leq n$. and denote $S^*=|support(A^T(AA^T)^{-1}B)|$. If every $k$-sparse matrix $X^{\ast} \in \mathbb{R}^{n \times r}$ can be recovered by $l_{2,0}$-minimization, then $X^{\ast}$ can also be recovered by $l_{2,p}$-minimization, for any $0<p< p^{\ast}(A,B)$, where

\begin{eqnarray}
\displaystyle p^*(A,B)=max\left\{f\left(S^*\right),f\left(\left\lceil \frac{m}{2} \right\rceil\right),f\left(\left\lceil \frac{n-2.5}{2} \right\rceil+1\right)\right\},
\end{eqnarray}
with
\begin{eqnarray}
\displaystyle f(x)=\frac{\ln (x+1)- \ln x}{\ln (\sqrt{2}+1)+\ln (\lambda n-n-2\lambda+3)-\ln 4}
\end{eqnarray}
and $\lambda=\displaystyle\frac{\lambda_{max}(A)}{\lambda_{min^{+}}(A)}.$

\end{theorem}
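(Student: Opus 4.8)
The plan is to combine the upper bound $M$ from Theorem 4 with the bounds on $k$ furnished by Corollary 1, and then invoke Corollary 2 (recovery $\iff h(p,A,r,k)<1$). First I would abbreviate the $k$-dependent coefficient appearing in $M$ by
\[
D(k)=\frac{\sqrt{2}+1}{2}\left(\frac{(\lambda-1)(n-2-k)}{2k}+\frac{\lambda-1}{2\sqrt{k}}+\frac{1}{2k}\right),
\]
so that Theorem 4 reads $h(p,A,r,k)\le M=\left(D(k)\,(k/(k+1))^{1/p}\right)^{p}$. Since $p>0$, taking $p$-th roots shows that $M<1$ is equivalent to $D(k)\,(k/(k+1))^{1/p}<1$. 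If $D(k)\le 1$ this holds for every $p$; if $D(k)>1$, then taking logarithms and using $\ln\frac{k}{k+1}<0$ rearranges the inequality into the clean form $p<\dfrac{\ln(k+1)-\ln k}{\ln D(k)}$.

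Next I would eliminate the dependence of $D$ on $k$ by a monotonicity argument. Writing $\frac{(\lambda-1)(n-2-k)}{2k}=\frac{(\lambda-1)(n-2)}{2k}-\frac{\lambda-1}{2}$, each of the three summands defining $D(k)$ is nonincreasing in $k$ for $\lambda\ge 1$ (which holds since $\lambda_{max}(A)\ge\lambda_{min}^{+}(A)$), so $D(k)\le D(1)$. A direct computation gives $D(1)=\frac{(\sqrt{2}+1)\big((n-2)(\lambda-1)+1\big)}{4}=\frac{(\sqrt{2}+1)(\lambda n-n-2\lambda+3)}{4}$, whose logarithm is precisely the denominator of $f$. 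Hence $\ln D(k)\le\ln D(1)$, and therefore
\[
f(k)=\frac{\ln(k+1)-\ln k}{\ln D(1)}\le\frac{\ln(k+1)-\ln k}{\ln D(k)},
\]
so that $0<p<f(k)$ forces $M<1$, whence $h(p,A,r,k)\le M<1$ and every $k$-sparse $X^{*}$ is recovered by $l_{2,p}$-minimization via Corollary 2.

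It remains to replace the sparsity $k$ by the three computable quantities. Because $\ln(1+1/x)$ is decreasing, $f$ is a decreasing function, so it suffices to show that each argument in the maximum is an upper bound for $k$; then $f(k)$ dominates each $f$-value, hence dominates $p^{*}=\max\{f(S^{*}),f(\lceil m/2\rceil),f(\lceil (n-2.5)/2\rceil+1)\}$, giving $p<p^{*}\le f(k)\Rightarrow$ recovery. The bounds $k\le\lceil m/2\rceil$ and $k\le\lceil (n-2.5)/2\rceil+1$ are exactly Corollary 1(c) and (b). For $S^{*}$, let $X^{*}$ be the $k$-sparse solution and $X_{LS}=A^{T}(AA^{T})^{-1}B$ the least-squares solution; both satisfy $AX=B$, so $X_{LS}-X^{*}\in(N(A))^{r}$. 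If $X_{LS}\ne X^{*}$, Corollary 1(a) gives $\|X_{LS}-X^{*}\|_{2,0}\ge 2k+1$, and since $\mathrm{support}(X_{LS}-X^{*})\subseteq\mathrm{support}(X_{LS})\cup\mathrm{support}(X^{*})$ we obtain $S^{*}+k\ge 2k+1$, i.e. $S^{*}\ge k+1$; if $X_{LS}=X^{*}$ then $S^{*}=k$. In either case $k\le S^{*}$, completing the three bounds.

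The main obstacle I anticipate is bookkeeping rather than a deep idea: one must treat the degenerate regime $D(1)\le 1$ separately (there the denominator of $f$ is nonpositive while $M<1$ holds automatically for all $p$, so the stated $p^{*}$ is simply not the binding constraint), and one must keep the direction of every inequality correct through both the logarithmic rearrangement and the monotonicity estimate $D(k)\le D(1)$. The only genuinely nontrivial step is the inequality $k\le S^{*}$; the rest is the algebra that converts $M<1$ into the closed form of $f$, which Theorem 4 and Corollary 1 have already set up for us.
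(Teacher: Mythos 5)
Your proof follows essentially the same route as the paper's: bound the $k$-dependent coefficient $D(k)$ by its value at $k=1$ (the paper does this via "the integer-value virtue of $\|X\|_{2,0}$", you via explicit monotonicity of each summand), solve $D(1)\left(k/(k+1)\right)^{1/p}<1$ for $p$ to obtain $f(k)$, and combine the monotonicity of $f$ with the three upper bounds on $k$ before invoking Corollary 2. The one point where you go beyond the paper is the bound $k\le S^*$: the paper simply attributes it to Corollary 1, where no such statement appears, whereas you correctly derive it by applying Corollary 1(a) to $X_{LS}-X^*$ with $X_{LS}=A^T(AA^T)^{-1}B$; your separate treatment of the degenerate regime $\ln D(1)\le 0$ is likewise a detail the paper omits.
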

\begin{proof}
According to Theorem 4, we can get the equivalence between $l_{2,0}$-minimization and $l_{2,p}$-minimization, as soon as $M<1$ where $M$ is defined in Theorem 4.
\begin{eqnarray}
 \displaystyle\frac{\sqrt{2}+1}{2}\left(\frac{(\lambda-1)(n-2-k)}{2k}+\frac{\lambda-1}{2\sqrt{k}}+\frac{1}{2k}\right)\left(\frac{k}{k+1}\right)^{\frac{1}{p}}<1
\end{eqnarray}
Due to the integer-value virtue of $\|X\|_{2,0}$, we have that,
$$\frac{\sqrt{2}+1}{2}\left(\frac{(\lambda-1)(n-2-k)}{2k}+\frac{\lambda-1}{2\sqrt{k}}+\frac{1}{2k}\right)\leq \frac{(\sqrt{2}+1)\left[(\lambda-1)(n-3)+\lambda\right]}{4}$$
Therefore, we can get a range of such $p$ from the following inequality,
$$\displaystyle \left(\frac{(\sqrt{2}+1)(\lambda n-n-2\lambda+3)}{4}\right)\left(\frac{k}{k+1}\right)^{\frac{1}{p}}<1$$
It is easy to solve this inequality, and we can get that
\begin{eqnarray}
\displaystyle p<f(k)=\frac{\ln(k+1)-\ln k}{\ln(\sqrt{2}+1)+\ln (\lambda n-n-2\lambda+3)-\ln 4}
\end{eqnarray}

Furthermore, it is easy to find that $f(k)$ is nonincreasing in $k$, and according to Corollary 1, we have that $k\leq S^*=|support(A^T(AA^T)^{-1}B)|$, $k\leq \left\lceil \frac{m}{2} \right\rceil$, and $k\leq \left\lceil \frac{n-2.5}{2} \right\rceil+1$
Therefore, it is obvious that $h(p,A,r,k)<1$ when $0<p<p^*(A,B)=max\left\{f\left(S^*\right),f\left(\left\lceil \frac{m}{2} \right\rceil\right),f\left(\left\lceil \frac{n-2.5}{2} \right\rceil+1\right)\right\}$.

The proof is completed.
\end{proof}

Now, we present one example to demonstrate the validation of our main contribution in this paper.
\begin{example}
We consider an underdetermined system $AX=B$, where\\
$A=\left(
\begin{array}{ccccc}
1.3746 & -1.2656 & -0.3614 & 1.2431 & 1.8634 \\
1.3495 & -1.4414 & -0.5365 & -0.4636 & -1.7457 \\
-1.2791 & -1.7630 & 0.7327 & 1.6626 & -0.9738 \\
1.4309 & 1.5486 & -0.0205 & 1.9911 & -1.0973
\end{array}
\right)$ and $B=\left(
\begin{array}{cc}
-3.1290 & -4.9924\\
0.3043 & 2.0500\\
-0.7892 & 0.1846\\
2.6459 & 3.7432
\end{array}
\right)$
\end{example}

It is easy to verify the unique sparse solution to $l_{2,0}$-minimization is
\begin{eqnarray}
X^{\ast}=\left(
\begin{array}{cc}
0 & 0 \\
1 & 1 \\
0 & 0 \\
0 & 0 \\
1 & 2
\end{array}
\right)
\end{eqnarray}
and $N(A)=\{ax:a \in R,x=[0.3217,-0.0331,0.9291,-0.1754,-0.0371]^T\}$.

So the solution to $AX=B$ can be expressed as the following form:
\begin{eqnarray}
X=\left(
\begin{array}{cc}
0.3217m & 0.3217n \\
1-0.0331m & 1-0.0331n \\
0.9291m & 0.9291n \\
-0.1754m & -0.1754n \\
1-0.0371m & 2-0.0371n
\end{array}
\right)
\end{eqnarray}
Where $m,n \in \mathbb R$, such that
\begin{eqnarray}
\notag\|X\|_{2,p}^p&=&(0.3217^2+0.9291^2+0.1754^2)^{\frac {p}{2}}(m^2+n^2)^{\frac {p}{2}}\\
\notag &&+[(1-0.0331m)^2+(1-0.0331n)^2]^{\frac {p}{2}}\\
&&+[(1-0.0371m)^2+(1-0.0371n)^2]^{\frac {p}{2}}
\end{eqnarray}
\begin{figure}[htbp]
\centering
\subfigure{
\begin{minipage}{7cm}
\centering
\includegraphics[width=\textwidth]{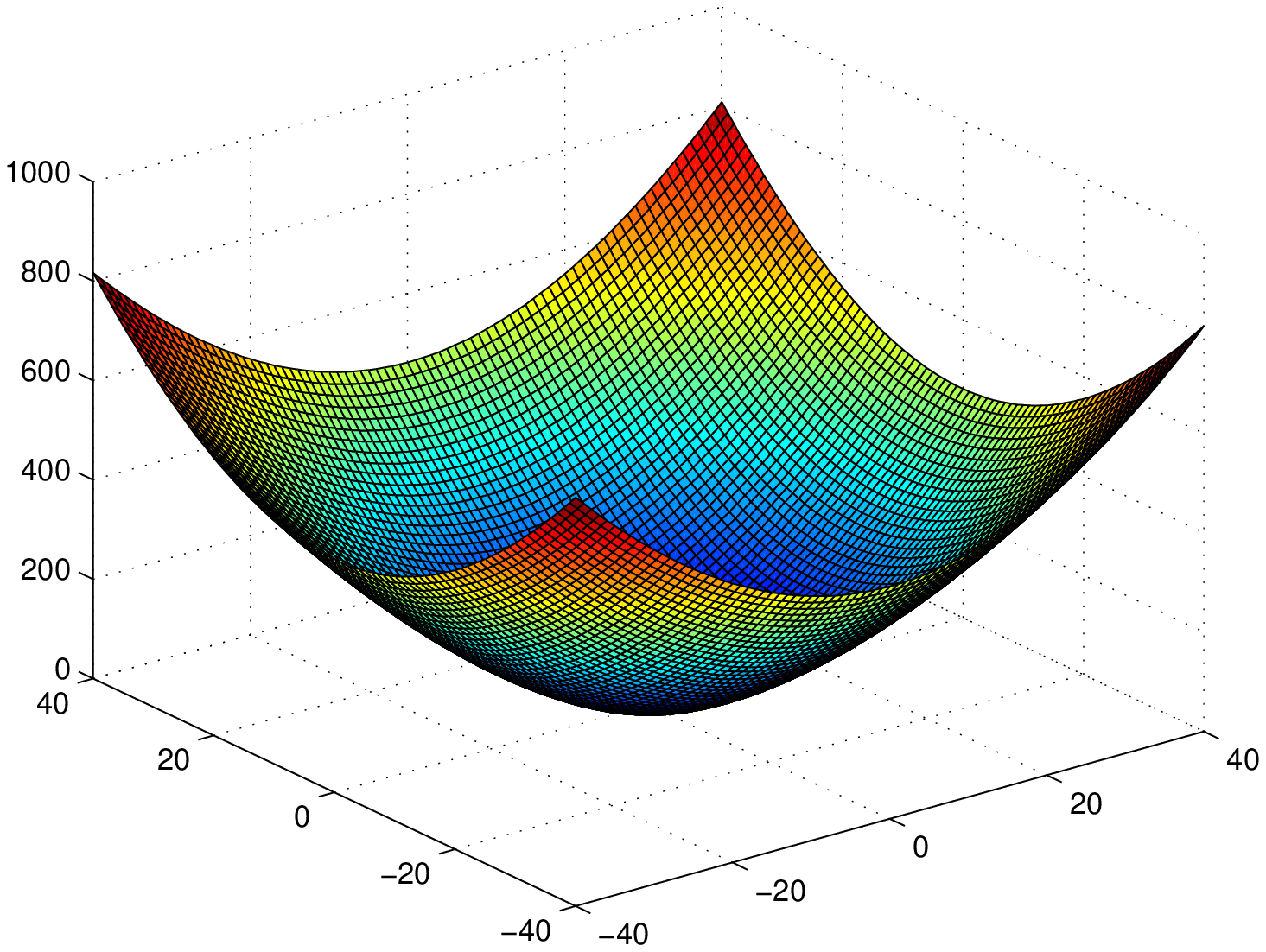}

\end{minipage}
}
\subfigure{
\begin{minipage}{7cm}
\centering
\includegraphics[width=\textwidth]{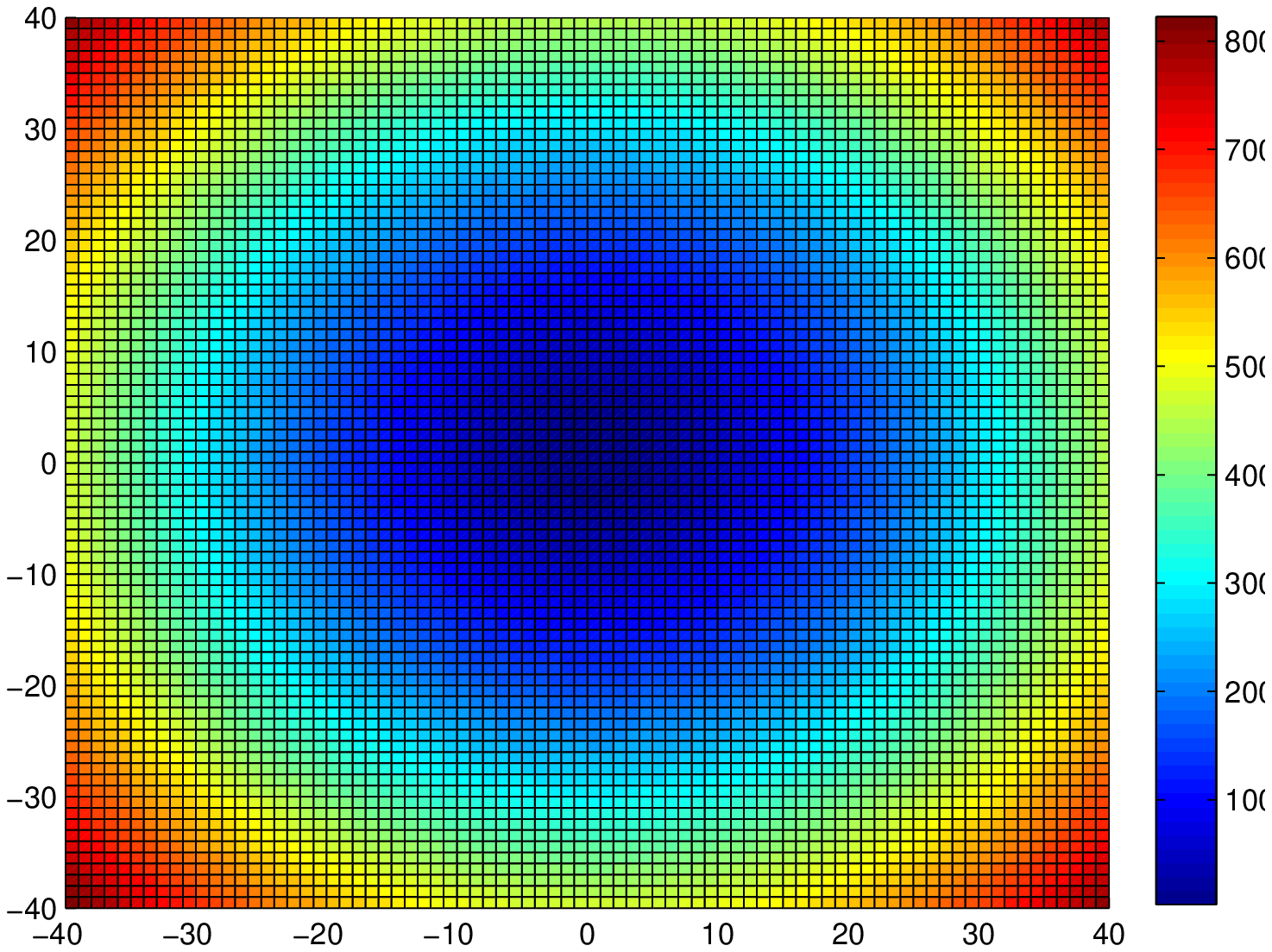}

\end{minipage}
}
\caption{p=0.8175}
\end{figure}
\begin{figure}[htbp]
\centering
\subfigure{
\begin{minipage}{7cm}
\centering
\includegraphics[width=\textwidth]{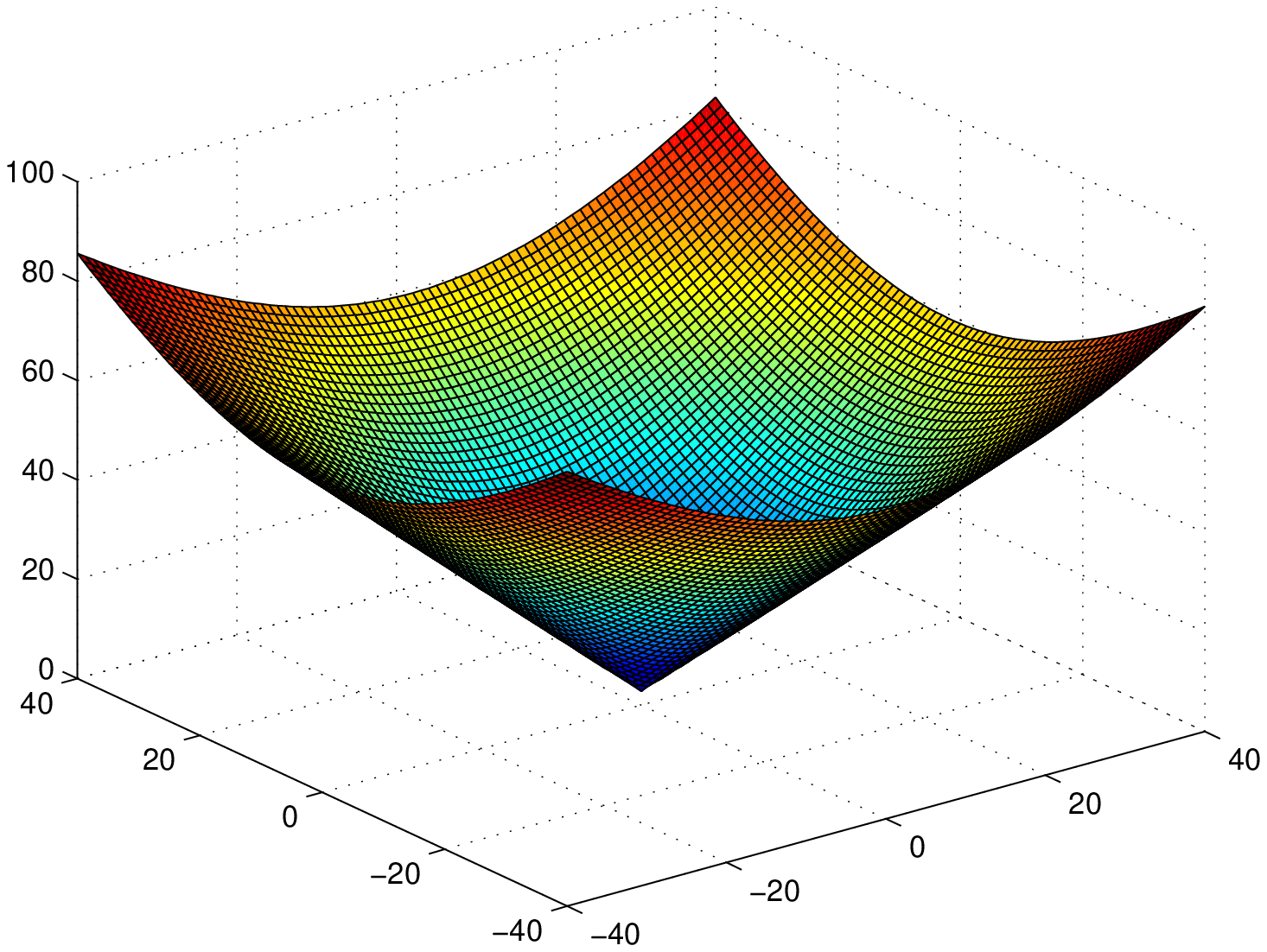}

\end{minipage}
}
\subfigure{
\begin{minipage}{7cm}
\centering
\includegraphics[width=\textwidth]{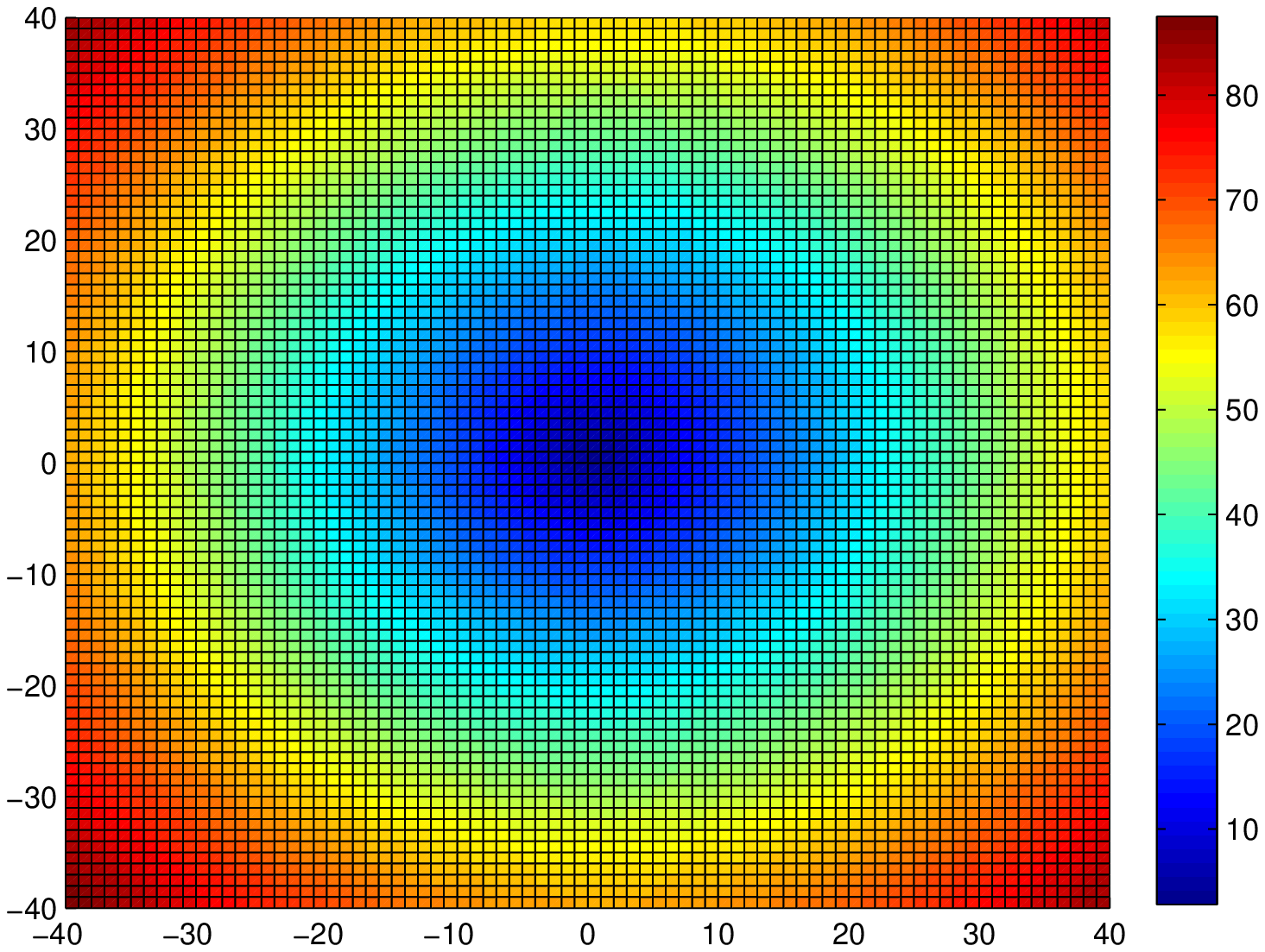}

\end{minipage}
}
\caption{p=0.5}
\end{figure}
\begin{figure}[htbp]
\centering
\subfigure{
\begin{minipage}{7cm}%²¢ÅÅ·ÅÁ½ÕÅͼƬ£¬Ã¿ÕÅÕ¼Ò³ÃæµÄ0.5£¬ÏÂͬ¡£
\centering
\includegraphics[width=\textwidth]{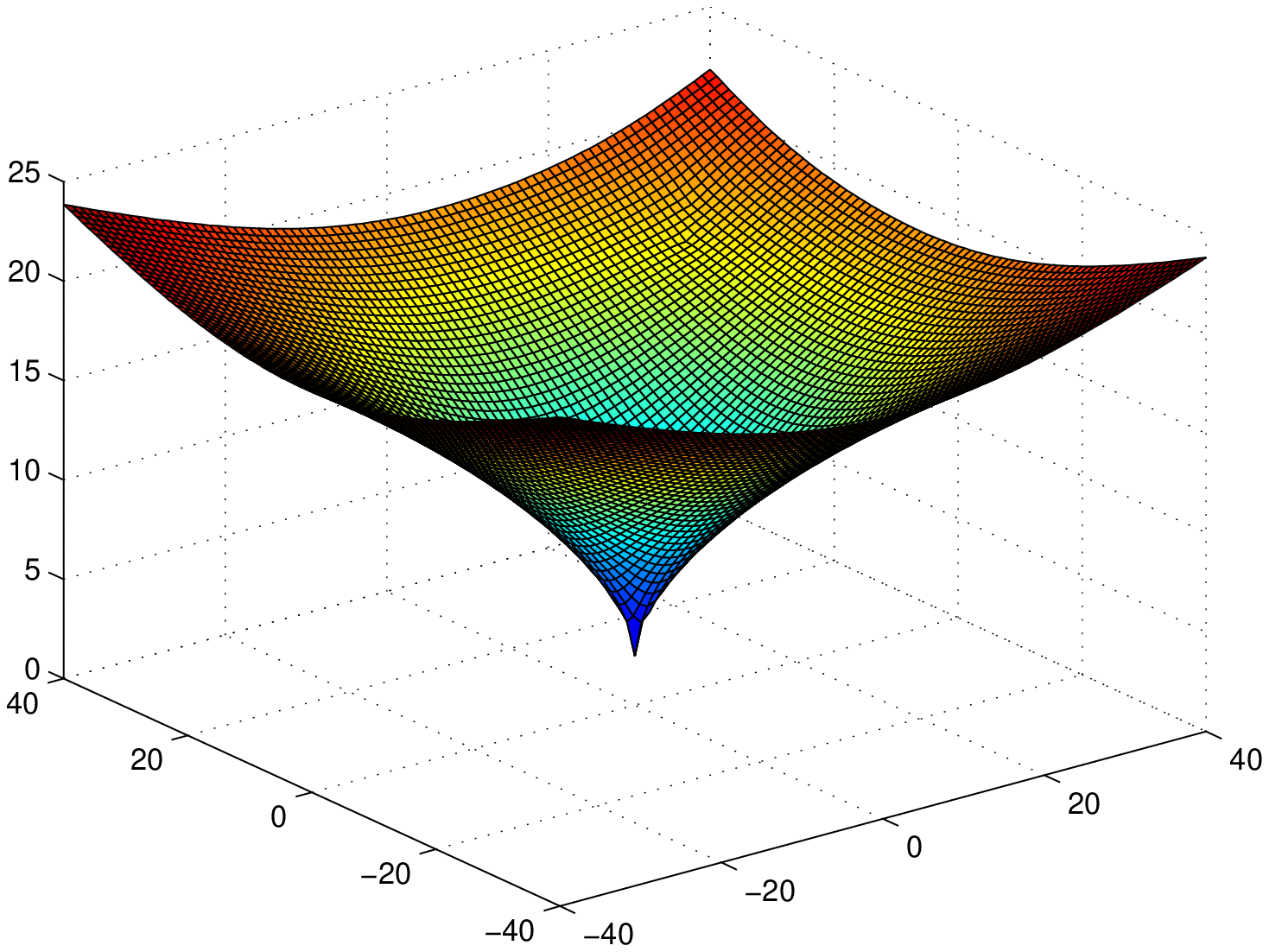}

\end{minipage}
}
\subfigure{
\begin{minipage}{7cm}
\centering
\includegraphics[width=\textwidth]{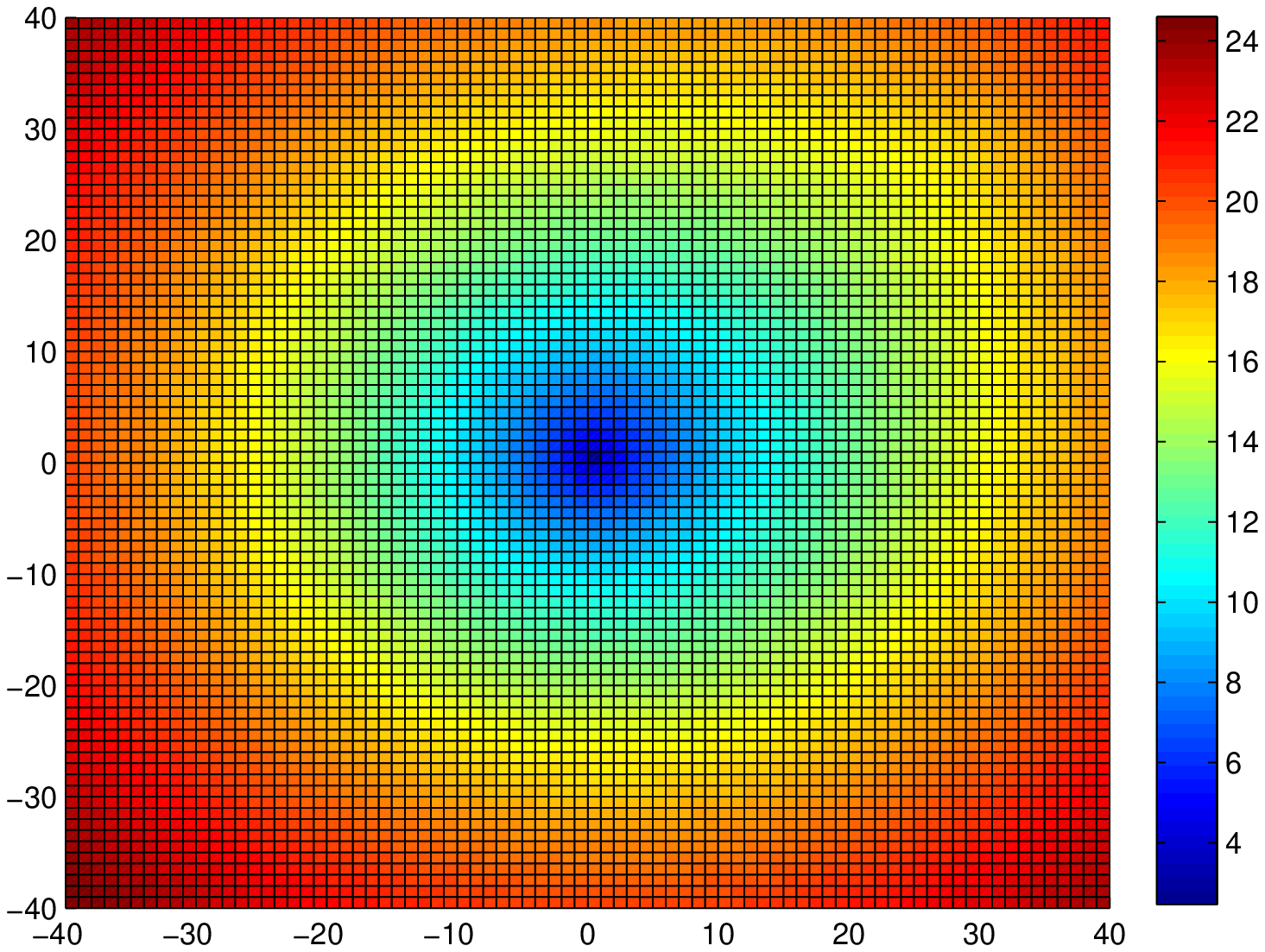}

\end{minipage}
}
\caption{p=0.3}
\end{figure}
Then we will verify the result in Theorem 5, it is easy to get that $p^{*}=0.8176$, and we show the cases when $p=0.8175$, $p=0.5$ and $p=0.3$. in Figure 2, Figure 3 and Figure 4.

It is obvious that $\|X\|_{2,p}$ has the minimum point at $n=m=0$ which is the original solution to $l_{2,0}$-minimization.
\section{CONCLUSION}
In this paper we have studied the equivalence relationship between $l_{2,0}$-minimization and $l_{2,p}$-minimization, and we give an analysis expression of such $p^{\ast}(A,B)$.

Furthermore, it needs to be pointed out that the conclusion in Theorem 4 and Theorem 5 is valid in single measurement vector problem. i.e. $l_p$-minimization also can recovery the original unique solution to $l_0$-minimization when $0<p<p^{\ast}$.

However, the analysis expression of such $p^{\ast}$ in Theorem 5 may not be the optimal result. In this paper, we consider all the underdetermined matrix $A\in \mathbb{R}^{m \times n}$ and $B\in \mathbb{R}^{m \times r}$ from a theoretical point of view. So the result can be improved with a particular structure of the matrix $A$ and $B$. For example, the underdetermined $A$ have restricted isometry property (RIP) of order $k$ which is widely used in many algorithms.
The authors think the answer to this problem will be an important improvement for the application of $l_{2,p}$-minimization. In conclusion, the authors hope that in publishing this paper, a brick will be thrown out and be replaced with a gem.

\bibliographystyle{plain}
\bibliography{myreference}
\end{document}